\documentclass[twocolumn]{svjour3}
\smartqed
\usepackage{amsmath}
\usepackage{amssymb}
\usepackage{bm}
\usepackage{graphicx}
\usepackage{xcolor}
\usepackage[colorlinks=true,linkcolor=blue,citecolor=blue,urlcolor=blue]{hyperref}
\newcommand{\hd}{\hat}
\newcommand{\rd}{\mathrm{d}}
\newcommand{\LL}{\mathcal{L}}
\begin{document}
\title{Static electric and magnetic traversable wormholes in\\
$(2+1)$-dimensional nonlinear electrodynamics}
\author{Mauricio Cataldo}
\institute{Mauricio Cataldo \at
Departamento de F\'isica, Facultad de Ciencias,
Universidad del B\'io-B\'io, Avenida Collao 1202, Casilla 5-C,
Concepci\'on, Chile \\
\email{mcataldo@ubiobio.cl}
}
\date{Received: date / Accepted: date}
\maketitle

\begin{abstract}
Every traversable wormhole reported so far in static $(2+1)$ gravity coupled to nonlinear electrodynamics has been obtained from a Lagrangian of the power-Maxwell form $\LL\propto|F|^{k}$ with the single value $k=1/2$, and in every case with a cosmological constant forced to a fixed sign or to zero. We show these restrictions are artefacts of that one choice of Lagrangian, not physical requirements, and give what we argue is a complete classification of static traversable wormholes in this theory. Of the three mutually exclusive electromagnetic configurations compatible with the symmetry, the radial electric branch admits no throat for any Lagrangian or cosmological constant. In each of the remaining two branches, azimuthal electric and magnetic, the field equations leave only two possibilities: either the redshift function is fixed ($e^{\Phi}=Cr$, or $\Phi=\text{const}$, respectively), in which case the shape function $b(r)$ is completely free and $\Lambda$ is absorbed without constraint; or it is not, in which case $b(r)$ becomes the free function instead, with the redshift determined from it. We solve both regimes in closed or explicit form in both branches, and prove that fixing $\LL(F)$ to any single power $k$, not only $k=1/2$, leaves the shape function determined up to a finite number of integration constants, and in the azimuthal branch forces $\Lambda\neq0$: the freedom found here requires $\LL(F)$ to be genuinely unrestricted. Along the way we identify an azimuthal-electric family whose shape function is exactly the static BTZ mass function, the same geometry that is a black hole in vacuum becoming a traversable wormhole once sourced by the nonlinear field instead. Finally, we apply the classification to the unique conformally invariant power-Maxwell theory in $(2+1)$ dimensions, $k=3/4$, previously known only to source a Coulomb-like charged black hole in the radial branch: the same Lagrangian produces genuine traversable wormholes in the azimuthal and magnetic branches, showing that it is the electromagnetic configuration, not the choice of Lagrangian, that decides between a horizon and a throat; Born-Infeld electrodynamics, by contrast, never violates the null energy condition in any branch and admits no throat at all.
\end{abstract}

\section{Introduction}\label{sec:intro}

Gravity in $2+1$ dimensions has long served as a laboratory in which questions
that are intractable in four dimensions become explicitly solvable. The Weyl
tensor vanishes identically, there are no propagating gravitational degrees of
freedom, and the full curvature is algebraically determined by the Ricci tensor.
The price of this simplicity is that all local geometry must be manufactured by
the matter sector: whatever structure a three-dimensional spacetime possesses is
put there by its sources. The discovery of the BTZ black hole \cite{BTZ,BTZ2}
showed how much structure a negative cosmological constant alone can generate,
and much of the subsequent literature has followed that lead. Electromagnetic
sources have been comparatively less exploited, even though they are the natural
next ingredient and have been known since the work of Deser and Jackiw
\cite{DeserJackiw,DJH} to interact with three-dimensional geometry in ways
without a four-dimensional analogue.

Nonlinear electrodynamics (NLED) is a particularly natural source in this
context. Introduced by Born and Infeld \cite{BornInfeld} on a principle of
finiteness, and later placed on a general footing by Pleba\'nski
\cite{Plebanski}, it has produced the first exact regular black holes in general
relativity \cite{AyonBeato1,AyonBeato2}, regular magnetic black holes and
monopoles \cite{Bronnikov2001}, and, in three dimensions, regular black holes
\cite{CataldoGarcia} and black holes with a Coulomb-like field
\cite{CataldoCruzDelCampoGarcia}. A subfamily that has received sustained
attention is the power-Maxwell class $\LL(F)\propto|F|^{k}$
\cite{HassaineMartinez,HassaineMartinez2,GurtugMazharimousaviHalilsoy}, which
breaks scale invariance except at $k=3/4$ and which, in three dimensions,
possesses features with no four-dimensional counterpart.

Traversable wormholes \cite{MorrisThorne,MTY,Visser} are supported by matter
violating the null energy condition, and the search for physically reasonable
sources with that property has repeatedly turned to NLED: unlike Maxwell theory,
whose stress-energy tensor obeys the NEC identically, a nonlinear Lagrangian can
in principle produce the required effective negative energy densities. In three
dimensions the subject was opened by Perry and Mann \cite{PerryMann}. The
electromagnetic question, however, was answered in the negative by Arellano and
Lobo \cite{ArellanoLobo}, who concluded that NLED with any Lagrangian
$\LL(F)$ cannot support static, spherically symmetric or stationary,
axisymmetric traversable wormholes, in $2+1$ as well as in $3+1$ dimensions.

That conclusion, however, rests on a specific ansatz for the electromagnetic
field rather than on the most general one compatible with the symmetries. The
general form of the field tensor in stationary cyclic $(2+1)$ spacetimes was
determined for Maxwell theory by Ay\'on, Cataldo and Garc\'ia
\cite{Cataldo2002,AyonCataldoGarcia,GarciaDiaz} and extended to arbitrary $\LL(F)$ by
Ca\~nate and Bret\'on \cite{CanateBreton}, whose theorem shows that the field is
governed by three constants $a$, $b$, $c$ subject to $ac=bc=0$, giving two
disjoint branches. Translated into the customary three-dimensional language,
in which $F_{tr}$ and $F_{t\varphi}$ are the radial and azimuthal components of
the electric field and $F_{r\varphi}$ is the magnetic pseudoscalar, the
admissible configurations are mutually exclusive: either a radial electric
field, or a magnetic field, or a uniform azimuthal electric field. It is the
last of these that is absent from the ansatz of Ref.~\cite{ArellanoLobo}, and
Ref.~\cite{CanateBreton} accordingly exhibited a stationary counterexample to
the non-existence proof.

Explicit wormhole solutions are by now known in two of the three branches, and
it is worth setting them side by side, because they share a feature that has
gone unremarked. Mazharimousavi, Halilsoy and Gurtug \cite{MHG} found a static
azimuthal-electric wormhole, but only for $\Lambda<0$; their $\Lambda=0$ limit
degenerates into a horizon-bearing configuration. Ca\~nate and Bret\'on
\cite{CanateBreton} found a five-parameter stationary family in the same branch,
again requiring $\Lambda<0$, and showed explicitly that switching off the
cosmological constant destroys the flaring-out condition. Mazharimousavi, Amirabi
and Halilsoy \cite{MAH} found a static magnetic family in which, by contrast,
$\Lambda=0$ is forced by the field equations, and in which the shape function may
be chosen freely; there the redshift function is constant and the coordinate
component of the magnetic field diverges at the throat, precisely the feature
that Ref.~\cite{ArellanoLobo} had invoked against the existence of such
geometries. The common feature is the choice of Lagrangian: all three works take
$\LL\propto\sqrt{|F|}$. That choice turns out to be degenerate, not merely
convenient. For $\LL=\alpha|F|^{k}$ one has
\begin{equation}
\LL-2F\LL_F=(1-2k)\,\LL ,
\label{eq:degeneracy}
\end{equation}
which vanishes identically if and only if $k=1/2$. The combination on the left is
exactly the one that sources the Einstein equation determining the shape function.
At $k=1/2$, therefore, the electromagnetic field is expelled from that equation
altogether, and the geometry of the throat must be produced by whatever else is
available. In the azimuthal electric branch the only remaining source is the
cosmological constant, which is why every solution in that branch has needed
$\Lambda<0$~\cite{CanateBreton,MHG}; the necessity is an artefact of the
Lagrangian, not a physical requirement. In the magnetic branch the degeneracy
acts differently, collapsing the nonlinear Maxwell equation to the statement
$e^{\Phi}=\text{const}$ and forcing $\Lambda=0$~\cite{MAH}.

The purpose of this paper is to remove that restriction and to give a complete classification of static traversable wormholes in both branches, without fixing $\LL(F)$ to any single power from the outset. Our results are as follows. We prove that the field equations admit an arbitrary shape function $b(r)$ if and only if the redshift function takes the form $e^{\Phi}=Cr$, for arbitrary $\Lambda$; in that regime we reconstruct $\LL(F)$ in closed form for any $b(r)$, with the field, $\LL$ and $\LL_F$ finite at the throat, and confined exotic matter in explicit examples. Away from that form, $b(r)$ becomes the free function instead, generically admitting no throat at all except where the redshift function itself becomes singular in a controlled way; we exhibit and fully verify such an exceptional solution. We further show that fixing $\LL(F)$ to a single power $|F|^k$, for any $k$ and not only $k=1/2$, forces $\Lambda\neq0$ and leaves $b(r)$ determined up to a finite number of integration constants, recovering the existing literature as the special, zero-freedom case of this construction; one member of this family reproduces the static BTZ mass function, the vacuum black hole becoming a traversable wormhole once sourced by the nonlinear field instead of vacuum plus $\Lambda$ alone.

The magnetic branch exhibits the analogous structure, and along the way we correct a factor-of-$r$ error in the shape-function--field relation of Ref.~\cite{MAH}. There are again three cases: a regime with $\Phi=\text{const}$ in which any $b(r)$ is admissible but always sources the same one-parameter family of Lagrangians, of which Ref.~\cite{MAH} is the $\Lambda=0$ member; a fixed-power regime, in which restricting $\LL(F)$ to a single power forces the magnetic field itself to a constant rather than a free function of the radial coordinate, the shape function $b(r)$ remains a genuine function of $r$, but with only one free parameter left to choose, $C_1$, rather than being freely specifiable; and a generic regime, exhibited explicitly, in which the roles invert relative to Sec.~\ref{sec:magnetic}.A: $b(r)$ is again the free function, chosen first, with the redshift function determined from it instead.

Taken together, these results are exhaustive in the following precise
sense. Proposition~\ref{prop:branches} shows the three electromagnetic
configurations compatible with the symmetry are mutually exclusive, and
Theorem~\ref{thm:radial} excludes the radial branch entirely, for any
$\LL(F)$ and any $\Lambda$. In each of the remaining two branches, every
static solution falls into exactly one of two regimes, distinguished by
Theorems~\ref{thm:rigidity} and~\ref{thm:magnetic-rigidity}: the one with fixed redshift function, where the redshift function is fixed and $b(r)$ is the free function; or the generic one, governed instead by Eq.~\eqref{eq:b-generic} (or its
magnetic analogue \eqref{eq:magnetic-b-generic}), which gives $b(r)$ for
any chosen redshift function and shows, by
Propositions~\ref{prop:generic-nothroat} and
\ref{prop:magnetic-generic-nothroat}, that no throat exists there except at
the boundary of that formula's validity; the explicit solutions of
Secs.~\ref{sec:electric}.C and~\ref{sec:magnetic}.C show that boundary is
reached, with the redshift function determined instead once $b(r)$ is
fixed there. No third regime exists, and in both regimes together exactly
one function remains free, together with $\Lambda$. This is what we mean
by a complete classification: not an exhaustive list of solutions, but the
general method, in closed or explicit form, that generates every solution
in either regime, in the same sense that a general solution formula
classifies all solutions of a differential equation rather than
enumerating them one by one.

Beyond this general classification, it is worth asking what the theory
predicts for a Lagrangian singled out on physical rather than
illustrative grounds. The unique power-Maxwell theory with a traceless
stress tensor in $(2+1)$ dimensions, $k=3/4$, was already known to source
a Coulomb-like charged black hole in the radial branch~\cite{CataldoCruzDelCampoGarcia}.
We show in Sec.~\ref{sec:examples} that the same Lagrangian, applied
instead to the azimuthal or magnetic field, produces a genuine
traversable wormhole in each case: the outcome is decided by which branch
of Proposition~\ref{prop:branches} the field occupies, not by the
Lagrangian itself. Born-Infeld electrodynamics~\cite{Cataldo:1999wr}, by
contrast, never violates the null energy condition in any branch and
admits no throat at all, underscoring that the classification's freedom
is not vacuous.

The paper is organised as follows. Section~\ref{sec:setup} fixes the action,
field equations and admissible field configurations. Section~\ref{sec:radial}
treats the radial electric branch, where the null energy condition is saturated
identically and no traversable wormhole is possible for any $\LL(F)$.
Section~\ref{sec:electric} develops the azimuthal electric branch, where
a theorem on the redshift function splits the analysis
into two regimes, each solved in turn. Section~\ref{sec:magnetic} does the same
for the magnetic branch. Section~\ref{sec:examples} applies the
classification to the conformally invariant, Coulomb-like theory.
Section~\ref{sec:conclusions} concludes. We use
geometrized units $G=c=1$ and signature $(-,+,+)$, and we define the
invariant as $F=\tfrac14F^{\mu\nu}F_{\mu\nu}$.

\section{Setup and admissible field configurations}\label{sec:setup}

\subsection{Action and field equations}

We consider $(2+1)$-dimensional general relativity minimally coupled to nonlinear
electrodynamics in the presence of a cosmological constant,
\begin{equation}
S=\int\sqrt{-g}\left[\frac{R-2\Lambda}{16\pi}+\LL(F)\right]\rd^{3}x ,
\label{eq:action}
\end{equation}
where $\LL(F)$ is a gauge-invariant electromagnetic Lagrangian depending on the
single invariant
\begin{equation}
F\equiv\tfrac14F^{\mu\nu}F_{\mu\nu},\qquad
F_{\mu\nu}=\partial_\mu A_\nu-\partial_\nu A_\mu .
\end{equation}
As is customary in three dimensions \cite{CataldoGarcia,ArellanoLobo}, the factor
$1/16\pi$ is retained in order to keep the parallelism with the four-dimensional
theory; since there is no Newtonian limit in $2+1$ dimensions its numerical value
carries no independent meaning. The second invariant
$G\sim{}^{*}F^{\mu\nu}F_{\mu\nu}$ does not exist as a scalar here, so the
restriction to $\LL(F)$ involves no loss of generality.

Variation with respect to the metric gives
\begin{equation}
G_{\mu\nu}+\Lambda g_{\mu\nu}=8\pi T_{\mu\nu},\qquad
T_{\mu\nu}=g_{\mu\nu}\LL-F_{\mu\alpha}F_\nu{}^{\alpha}\LL_F ,
\label{eq:einstein}
\end{equation}
with $\LL_F\equiv\rd\LL/\rd F$. Variation with respect to $A_\mu$ gives the
electromagnetic field equation, while $F_{\mu\nu}=\partial_\mu A_\nu-\partial_\nu
A_\mu$ automatically implies the electromagnetic Bianchi identity, unrelated to
the gravitational one $\nabla_\mu G^{\mu\nu}\equiv0$ satisfied identically by
\eqref{eq:einstein}:
\begin{equation}
\big(\LL_F\,F^{\mu\nu}\big)_{;\mu}=0 ,\qquad \rd\bm{F}=0 .
\label{eq:maxwell}
\end{equation}
Throughout we assume $\LL_F\neq0$, the case $\LL_F\equiv0$ corresponding to a
constant Lagrangian, i.e.\ to a mere shift of $\Lambda$.

\subsection{Geometry and traversability}

The static, circularly symmetric line element is written in Morris--Thorne form
\cite{MorrisThorne,PerryMann}
\begin{equation}
\rd s^{2}=-e^{2\Phi(r)}\rd t^{2}+\frac{\rd r^{2}}{1-b(r)/r}+r^{2}\rd\varphi^{2},
\label{eq:metric}
\end{equation}
with $\Phi(r)$ the redshift function and $b(r)$ the shape function. The radial
coordinate ranges from the throat $r_0$, defined by $b(r_0)=r_0$, outwards, and
$2\pi r$ is the proper circumference of a circle centred on the throat.
Traversability requires
\begin{align}
&1-b(r)/r>0\quad\text{for }r>r_0, \label{eq:domain}\\
&\Phi(r)\ \text{finite for all }r\geq r_0, \label{eq:nohorizon}\\
&b(r)-rb'(r)>0, \quad\text{i.e.\ }b'(r_0)<1\ \text{at the throat}.
\label{eq:flareout}
\end{align}
Condition \eqref{eq:nohorizon} excludes event horizons, which would occur where
$e^{2\Phi}\to0$, and \eqref{eq:flareout} is the flaring-out condition deduced
from the embedding of the $t=\text{const}$ slice.

It is convenient to work in the orthonormal frame
\begin{equation}
\bm{e}_{\hd t}=e^{-\Phi}\bm{e}_t,\quad
\bm{e}_{\hd r}=\Big(1-\frac{b}{r}\Big)^{1/2}\bm{e}_r,\quad
\bm{e}_{\hd\varphi}=\frac{1}{r}\bm{e}_\varphi .
\end{equation}
A direct computation gives, for the metric \eqref{eq:metric},
\begin{align}
G_{\hd t\hd t}&=\frac{rb'-b}{2r^{3}} , \label{eq:Gtt}\\[2pt]
G_{\hd r\hd r}&=\Big(1-\frac{b}{r}\Big)\frac{\Phi'}{r} , \label{eq:Grr}\\[2pt]
G_{\hd\varphi\hd\varphi}&=\Big(1-\frac{b}{r}\Big)
\Big[\Phi''+(\Phi')^{2}-\frac{rb'-b}{2r(r-b)}\Phi'\Big] ,
\label{eq:Gpp}
\end{align}
all remaining components vanishing identically,
\begin{equation}
G_{\hd t\hd r}=G_{\hd t\hd\varphi}=G_{\hd r\hd\varphi}=0 .
\label{eq:Goff}
\end{equation}
Equations \eqref{eq:Gtt}--\eqref{eq:Gpp} agree with Eqs.~(10)--(12) of
Ref.~\cite{ArellanoLobo}. The
identities \eqref{eq:Goff} are a property of the ansatz \eqref{eq:metric} alone
and, through \eqref{eq:einstein}, they constrain the matter sector; this is the
mechanism we exploit next.

\subsection{The three branches}

In three dimensions the field tensor has three independent components. In the
orthonormal frame we write
\begin{equation}
E\equiv F_{\hd t\hd r},\qquad
\mathcal{E}\equiv F_{\hd t\hd\varphi},\qquad
B\equiv F_{\hd r\hd\varphi},
\label{eq:components}
\end{equation}
i.e.\ the radial and azimuthal components of the electric field and the magnetic
pseudoscalar. No further components exist. The invariant reads
\begin{equation}
F=\tfrac12\big(B^{2}-E^{2}-\mathcal{E}^{2}\big) ,
\label{eq:invariant}
\end{equation}
so that purely electric configurations have $F<0$ and purely magnetic ones
$F>0$.

Evaluating the stress-energy tensor of \eqref{eq:einstein} in the orthonormal
frame, the diagonal components are
\begin{align}
T_{\hd t\hd t}&=-\LL-\big(E^{2}+\mathcal{E}^{2}\big)\LL_F , \label{eq:Ttt}\\
T_{\hd r\hd r}&=\ \ \LL+\big(E^{2}-B^{2}\big)\LL_F , \label{eq:Trr}\\
T_{\hd\varphi\hd\varphi}&=\ \ \LL+\big(\mathcal{E}^{2}-B^{2}\big)\LL_F ,
\label{eq:Tpp}
\end{align}
Combining the general field equation \eqref{eq:einstein} with the geometric
expressions \eqref{eq:Gtt}--\eqref{eq:Gpp} and the stress-energy components
\eqref{eq:Ttt}--\eqref{eq:Tpp} just obtained gives the fully coupled
Einstein--nonlinear-electrodynamics system for the structural functions $b(r)$,
$\Phi(r)$ and the field content $(E,\mathcal{E},B)$:
\begin{equation}
\frac{rb'-b}{2r^{3}}=\Lambda-8\pi\LL-8\pi\big(E^{2}+\mathcal{E}^{2}\big)\LL_F ,
\label{eq:coupled-tt}
\end{equation}
\begin{multline}
\Big(1-\frac{b}{r}\Big)\frac{\Phi'}{r}\\
=8\pi\LL+8\pi\big(E^{2}-B^{2}\big)\LL_F-\Lambda ,
\label{eq:coupled-rr}
\end{multline}
\begin{multline}
\Big(1-\frac{b}{r}\Big)\Big[\Phi''+(\Phi')^{2}-\frac{rb'-b}{2r(r-b)}\Phi'\Big]\\
=8\pi\LL+8\pi\big(\mathcal{E}^{2}-B^{2}\big)\LL_F-\Lambda ,
\label{eq:coupled-pp}
\end{multline}
together with the electromagnetic field equation \eqref{eq:maxwell}. Equations
\eqref{eq:coupled-tt}--\eqref{eq:coupled-pp} are simply \eqref{eq:einstein}
written out in components; they hold before the branch structure is imposed,
and each of Sections~\ref{sec:radial}--\ref{sec:magnetic} is obtained from this
same system by setting two of the three fields $(E,\mathcal{E},B)$ to zero, as
Proposition~\ref{prop:branches} below requires. Adding
\eqref{eq:coupled-tt} and \eqref{eq:coupled-rr}, for instance, cancels $\Lambda$
and reproduces the first line of \eqref{eq:NECbranches} ahead of its
derivation.

The off-diagonal components of $T_{\hd\mu\hd\nu}$ are
\begin{equation}
T_{\hd t\hd r}=-\mathcal{E}B\,\LL_F,\quad
T_{\hd t\hd\varphi}=E B\,\LL_F,\quad
T_{\hd r\hd\varphi}=E\mathcal{E}\,\LL_F .
\label{eq:Toff}
\end{equation}
Note that $\Lambda$, being proportional to $g_{\mu\nu}$, contributes nothing to
\eqref{eq:Toff}. Imposing \eqref{eq:einstein} together with the geometric
identities \eqref{eq:Goff}, and recalling $\LL_F\neq0$, we obtain the algebraic
system
\begin{equation}
\mathcal{E}B=0,\qquad EB=0,\qquad E\mathcal{E}=0 ,
\label{eq:branches}
\end{equation}
from which the following statement is immediate.

\begin{proposition}[Mutual exclusivity of the branches]
\label{prop:branches}
For a static, circularly symmetric spacetime of the form \eqref{eq:metric} in
general relativity coupled to nonlinear electrodynamics with arbitrary $\LL(F)$
and $\LL_F\neq0$, at most one of the three field components
\eqref{eq:components} may be non-vanishing. The admissible configurations are
therefore
\begin{enumerate}
\item[(i)] the \emph{radial electric} branch, $E\neq0$;
\item[(ii)] the \emph{azimuthal electric} branch, $\mathcal{E}\neq0$;
\item[(iii)] the \emph{magnetic} branch, $B\neq0$.
\end{enumerate}
No dyonic configuration exists in $2+1$ dimensions.
\end{proposition}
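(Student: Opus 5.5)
The plan is to read the statement off the off-diagonal Einstein equations, in three steps.

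First, I will check that the orthonormal-frame components in \eqref{eq:Toff} are correct. With the frame metric $\eta=\mathrm{diag}(-1,1,1)$, the mixed contraction $F_{\hd\mu\hd\alpha}F_{\hd\nu}{}^{\hd\alpha}$ for $\mu\neq\nu$ has exactly one surviving term. For example, $F_{\hd t\hd\alpha}F_{\hd r}{}^{\hd\alpha}=F_{\hd t\hd\varphi}F_{\hd r\hd\varphi}=\mathcal{E}B$, since the $\alpha=\hd t$ and $\alpha=\hd r$ terms vanish by antisymmetry. The $g_{\mu\nu}\LL$ and $\Lambda g_{\mu\nu}$ pieces do not contribute off the diagonal. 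This reproduces \eqref{eq:Toff}, and in particular the sign of each entry plays no role.

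Second, I will combine \eqref{eq:einstein} with the purely geometric identities \eqref{eq:Goff}. These identities hold for the ansatz \eqref{eq:metric} for any $\Phi$ and $b$. They give $T_{\hd\mu\hd\nu}=0$ for $\mu\neq\nu$, and dividing by $\LL_F\neq0$ yields \eqref{eq:branches}. The one point needing care is that $\LL_F$ is evaluated on the solution, so it is a function of $r$. I would therefore state the assumption $\LL_F\neq0$ pointwise, or on an open set, and conclude the products vanish there. Continuity of the fields then extends the vanishing, or equivalently the branch classification is understood locally in $r$.

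Third, I will solve the algebraic system. If two of $E,\mathcal{E},B$ were nonzero at a point, their product would be one of the three products in \eqref{eq:branches}, a contradiction. Hence at most one component is nonzero, which gives cases (i)--(iii) together with the trivial vacuum case. A dyon would need $E\neq0$ or $\mathcal{E}\neq0$ together with $B\neq0$, which is excluded by $EB=0$ and $\mathcal{E}B=0$.

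The only step I expect to be a subtlety is the pointwise-versus-global issue. In principle a field could be radial-electric on one $r$-interval and magnetic on another, switching where both vanish. I would handle this by assuming analyticity or smoothness of the fields in $r$ and nonvanishing on a dense set. Under analyticity, if $E\not\equiv0$ then its zeros are isolated, so $B$ and $\mathcal{E}$ vanish on a dense set and hence identically. This gives exclusivity on the whole domain, as the proposition asserts.
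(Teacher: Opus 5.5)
Your proposal is correct and follows the paper's own argument. The off-diagonal components \eqref{eq:Toff}, combined with the geometric identities \eqref{eq:Goff} and $\LL_F\neq0$, give the algebraic system \eqref{eq:branches}, and mutual exclusivity follows at once; your pointwise-versus-global remark, handled via analyticity, is a sensible refinement that the paper does not spell out, since it simply assumes $\LL_F\neq0$ throughout.
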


Proposition~\ref{prop:branches} has a Maxwell antecedent. For the linear theory,
$\LL\propto F$ with $\LL_F$ constant, the same mutual exclusivity was established
in Ref.~\cite{Cataldo2002}, whose Eqs.~(15)--(17) coincide with our system
\eqref{eq:branches}, and which obtained it in the more general static gauge
$\rd s^{2}=e^{2\alpha}\rd t^{2}-e^{2\beta}\rd r^{2}-e^{2\gamma}\rd\varphi^{2}$
without fixing the areal radius. The Morris--Thorne form \eqref{eq:metric} used
here entails no loss of generality in that respect, since $g_{\varphi\varphi}=r^2$
merely defines $r$ as the areal radius; what Proposition~\ref{prop:branches} adds
is that the conclusion is independent of the constitutive law, holding for
arbitrary $\LL(F)$. In the stationary cyclic case the corresponding statement is
Theorem~1 of Ca\~nate and Bret\'on \cite{CanateBreton}, which generalises to
$\LL(F)$ the Maxwell results of
Refs.~\cite{Cataldo2002,AyonCataldoGarcia,GarciaDiaz}; in their notation the
branches (i), (iii) and (ii) correspond respectively to $b\neq0$, $a\neq0$ and
$c\neq0$. It is precisely branch (ii) that is absent from the ansatz of
Ref.~\cite{ArellanoLobo}, whose Eq.~(6) admits only $E$ and $B$; the mutual
exclusivity of those two is recovered here as the second of
Eqs.~\eqref{eq:branches}, in agreement with Eq.~(18) of that reference.

Two consequences of Proposition~\ref{prop:branches} organise the remainder of the
paper. First, since the branches are disjoint, they must be analysed separately,
and a non-existence result in one carries no implication for the others. Second,
the combination that controls the null energy condition,
\begin{equation}
T_{\hd\mu\hd\nu}k^{\hd\mu}k^{\hd\nu}
=T_{\hd t\hd t}+T_{\hd r\hd r}
=-\big(\mathcal{E}^{2}+B^{2}\big)\LL_F ,
\end{equation}
evaluated with $k^{\hd\mu}=(1,\pm1,0)$, reduces in each branch to
\begin{equation}
T_{\hd\mu\hd\nu}k^{\hd\mu}k^{\hd\nu}=
\begin{cases}
0 & \text{branch (i)},\\[2pt]
-\mathcal{E}^{2}\LL_F & \text{branch (ii)},\\[2pt]
-B^{2}\LL_F & \text{branch (iii)}.
\end{cases}
\label{eq:NECbranches}
\end{equation}
The radial electric branch saturates the null energy condition identically, for
every $\LL(F)$ and at every radius; the other two do not. This single observation
already separates branch (i) from the other two, and we take it up in
Sec.~\ref{sec:radial}.

Finally, we record the form taken by the electromagnetic field equations
\eqref{eq:maxwell} in each branch. With $\sqrt{-g}=e^{\Phi}(1-b/r)^{-1/2}r$, the
first of \eqref{eq:maxwell} integrates to
\begin{equation}
r\,E\,\LL_F=q_e \quad\text{(i)},\qquad
e^{\Phi}\LL_F\sqrt{2F}=\text{const}\quad\text{(iii)},
\label{eq:gauss}
\end{equation}
for the radial electric and magnetic branches respectively, $q_e$ being the
electric charge. In the magnetic branch $E=\mathcal{E}=0$, so
\eqref{eq:invariant} gives $F=\tfrac12B^{2}$ and $\sqrt{2F}=|B|$; Eq.~(iii) is
therefore simply $e^{\Phi}\LL_F B=\text{const}$, written through the invariant
rather than through $B$ itself. This is deliberate rather than cosmetic: since
$\LL_F\equiv d\LL/dF$ is by construction a function of $F$, casting the
constraint in terms of $F$ is exactly the form needed in
Sec.~\ref{sec:magnetic} to integrate $\LL_F$ directly into $\LL(F)$ once a
shape function $b(r)$ is chosen, without a separate step converting $B(r)$ into
$F(r)$; writing $\sqrt{2F}$ also sidesteps the sign ambiguity of the
pseudoscalar $B$. Equation~(i) is left with $E$ explicit instead, since no such
reconstruction is needed there: Sec.~\ref{sec:radial} is a non-existence
result for arbitrary $\LL(F)$, and \eqref{eq:gauss}(i) is used only in its
plain Reissner--Nordstr\"om-like form.
In the azimuthal electric branch, by contrast, the first of
\eqref{eq:maxwell} is satisfied identically by staticity and circular symmetry
and imposes no condition at all; the constraint comes instead from the Bianchi
identity, which gives
\begin{equation}
F_{t\varphi}=\text{const}\quad\text{(ii)},\qquad\text{i.e.}\qquad
\mathcal{E}=\frac{\text{const}}{r\,e^{\Phi}} .
\label{eq:bianchi2}
\end{equation}
Ref.~\cite{Cataldo2002} already noted that the first of \eqref{eq:maxwell}, the Gauss-type equation used in \eqref{eq:gauss} for the other two branches, leaves $\mathcal{E}$ completely undetermined in this branch: unlike $E$ and $B$, it satisfies no differential equation of that kind at all, for any $\LL(F)$. What Eq.~\eqref{eq:bianchi2} does is show where the missing
determination of $\mathcal{E}$ actually comes from: not from
the first equation of \eqref{eq:maxwell}, but from its second, the electromagnetic Bianchi identity
$\rd\bm{F}=0$. It reproduces Eq.~(29) of that reference for the linear theory.
The contrast between \eqref{eq:gauss} and \eqref{eq:bianchi2} is the technical
origin of the differences developed in the following sections.

\section{The radial electric branch}\label{sec:radial}

In this branch $E\neq0$ and $\mathcal{E}=B=0$, fixed once and for all by
Proposition~\ref{prop:branches}. We show that no traversable wormhole exists
here, for \emph{any} $\LL(F)$ with $\LL_F\neq0$: the obstruction is purely
geometric and does not depend on the constitutive law of the electromagnetic
field, nor on $\Lambda$.

The starting point is already in hand. Setting $\mathcal{E}=B=0$ in the coupled
system \eqref{eq:coupled-tt}--\eqref{eq:coupled-rr} gives
\begin{gather}
\frac{rb'-b}{2r^{3}}=\Lambda-8\pi\LL-8\pi E^{2}\LL_F ,
\label{eq:radial-restricted-a}\\
\Big(1-\frac{b}{r}\Big)\frac{\Phi'}{r}=8\pi\LL+8\pi E^{2}\LL_F-\Lambda ,
\label{eq:radial-restricted-b}
\end{gather}
and adding these two equations, both $\Lambda$ and the matter content
$8\pi\LL\pm8\pi E^{2}\LL_F$ cancel identically, for every $\LL(F)$ and at every
radius, leaving a statement about the geometry alone,
\begin{equation}
\frac{rb'-b}{2r^{3}}+\Big(1-\frac{b}{r}\Big)\frac{\Phi'}{r}=0 .
\label{eq:radial-ode}
\end{equation}
This cancellation is equivalent to $T_{\hd t\hd t}+T_{\hd r\hd r}=0$, the first
case of \eqref{eq:NECbranches}, but
\eqref{eq:radial-restricted-a}--\eqref{eq:radial-restricted-b} show it
directly at the level of the coupled field equations, without appealing
separately to the sign of $g_{\hd t\hd t}$ and $g_{\hd r\hd r}$.
Equation \eqref{eq:radial-ode} determines $\Phi'$ explicitly in terms of $b(r)$
alone, independently of $\LL(F)$, $E(r)$ and $\Lambda$; it integrates exactly to
\begin{equation}
e^{2\Phi(r)}=C\Big(1-\frac{b(r)}{r}\Big),
\label{eq:radial-solution}
\end{equation}
with $C>0$ an integration constant that can always be set to unity by a
constant rescaling of $t$. Equation \eqref{eq:radial-solution} is the familiar
single-metric-function form shared by the Reissner--Nordstr\"om family and by
its three-dimensional, nonlinear-electrodynamics analogues
\cite{CataldoGarcia,CataldoCruzDelCampoGarcia}; what
\eqref{eq:radial-restricted-a}--\eqref{eq:radial-solution} shows is that in $2+1$
dimensions this form is not a simplifying choice but a theorem, forced by the
branch structure alone.

\begin{theorem}[No traversable wormhole in the radial electric branch]
\label{thm:radial}
Let $\LL(F)$ be any electromagnetic Lagrangian with $\LL_F\neq0$, and let
$\Lambda$ be arbitrary. Every static, circularly symmetric solution of
\eqref{eq:einstein}--\eqref{eq:maxwell} in the radial electric branch that
admits a throat, $b(r_0)=r_0$, satisfies $e^{2\Phi(r_0)}=0$. The throat
coincides with an event horizon, in violation of the no-horizon condition
\eqref{eq:nohorizon}; no traversable wormhole exists in this branch.
\end{theorem}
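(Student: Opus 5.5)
The plan is to take the exact integral \eqref{eq:radial-solution} of the geometric identity \eqref{eq:radial-ode} and evaluate it at the throat. Before doing so, I will check that the derivation of \eqref{eq:radial-ode} really uses nothing beyond the branch structure. By Proposition~\ref{prop:branches}, in branch (i) we have $\mathcal{E}=B=0$. Adding \eqref{eq:radial-restricted-a} and \eqref{eq:radial-restricted-b} then cancels $\Lambda$, $\LL$ and $E^2\LL_F$ identically. This step needs no assumption on the functional form of $\LL(F)$, on the sign of $\LL_F$, or on Gauss's law \eqref{eq:gauss}(i); it only requires $\LL_F\neq0$, so that the branch decomposition holds.

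Next I integrate \eqref{eq:radial-ode} on the open region $r>r_0$. There $1-b/r>0$ by \eqref{eq:domain}, so we may divide. Writing $f(r)=1-b(r)/r$, one has $f'=(b-rb')/r^2$, and \eqref{eq:radial-ode} becomes $2\Phi' f = -(rb'-b)/r^2 = f'$. Hence $(\ln f)'=2\Phi'$ and $e^{2\Phi}=Cf$ with a constant $C>0$, on the whole connected domain $r>r_0$. This is \eqref{eq:radial-solution}, and the point of redoing it is to make explicit that the integration constant is the same all the way down to the throat.

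Finally I take the limit $r\to r_0^+$. Continuity of $b$ gives $f(r)\to 1-b(r_0)/r_0=0$, so $e^{2\Phi(r)}\to0$. Equivalently $\Phi\to-\infty$, which violates \eqref{eq:nohorizon}: the surface $r=r_0$ is a Killing horizon, with $g_{tt}=0$ there. This holds for any $\Lambda$ and any $\LL(F)$, so no traversable wormhole exists in this branch.

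The only delicate point is the limiting argument at the throat. The identity $e^{2\Phi}=Cf$ is derived where $f>0$, and one must rule out the possibility that $\Phi$ could be extended to a finite value at $r_0$ by some different relation. I would handle this directly. If $\Phi$ is finite and continuous at $r_0$, then by continuity $e^{2\Phi(r_0)}=\lim_{r\to r_0^+} Cf(r)=0$, a contradiction. So the throat is necessarily a horizon rather than a regular surface. The flaring-out condition \eqref{eq:flareout} is not needed, which makes the obstruction purely geometric, as the theorem states.
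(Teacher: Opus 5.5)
Your proposal is correct and follows the paper's own argument. You add the $\hat t\hat t$ and $\hat r\hat r$ equations so that $\Lambda$ and all matter terms cancel, integrate \eqref{eq:radial-ode} to $e^{2\Phi}=C(1-b/r)$, and evaluate at the throat; your integration on $r>r_0$ followed by the limit $r\to r_0^+$ simply makes rigorous what the paper's one-line evaluation of \eqref{eq:radial-solution} at $r=r_0$ takes for granted.
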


\begin{proof}
Immediate from \eqref{eq:radial-solution}: at $r=r_0$ the shape function
satisfies $1-b(r_0)/r_0=0$ by definition of the throat, so
$e^{2\Phi(r_0)}=0$ regardless of the value of $C$, of $\Lambda$, and of
the specific form of $\LL(F)$.
\end{proof}

Theorem~\ref{thm:radial} reproduces, for arbitrary $\Lambda$, a conclusion that
Arellano and Lobo already reach directly in $2+1$ dimensions in
Ref.~\cite{ArellanoLobo} for $\Lambda=0$. For transparency we record their relevant equations
here, in their own notation, which coincides with ours on this point. Their
electromagnetic field tensor is taken of the form
\begin{equation}
F_{\mu\nu}=E(r)\big(\delta^t_\mu\delta^r_\nu-\delta^r_\mu\delta^t_\nu\big)
+B(r)\big(\delta^\varphi_\mu\delta^r_\nu-\delta^r_\mu\delta^\varphi_\nu\big),
\label{eq:AL-ansatz}
\end{equation}
their Eq.~(6): note the absence of any $F_{t\varphi}$ term. Their Einstein
tensor components, their Eqs.~(10)--(11), are $G_{\hd t\hd t}=(b'r-b)/2r^{3}$
and $G_{\hd r\hd r}=(1-b/r)\Phi'/r$, in agreement with our
\eqref{eq:Gtt}--\eqref{eq:Grr} once $\Lambda=0$ is set, since their action
carries no cosmological constant. Setting $B(r)=0$, their field equations give
\begin{equation}
\Phi'=-\frac{b'r-b}{2r(r-b)},\qquad\qquad e^{2\Phi}=1-\frac{b}{r},
\label{eq:AL-1920}
\end{equation}
their Eqs.~(19)--(20): precisely the $\Lambda=0$ case of
\eqref{eq:radial-ode}--\eqref{eq:radial-solution}. Their general expression for
the null energy condition combination is
\begin{equation}
T_{\hd\mu\hd\nu}k^{\hd\mu}k^{\hd\nu}
=\frac{1}{8\pi}\left[\frac{b'r-b}{r^{3}}+\Big(1-\frac{b}{r}\Big)\frac{\Phi'}{r}\right],
\label{eq:AL-13}
\end{equation}
their Eq.~(13); comparison with their own Eqs.~(10)--(11) just quoted shows
that the first term should carry a factor of two, $(b'r-b)/2r^{3}$, matching
our \eqref{eq:NECbranches}. The slip does not affect the sign structure at the
throat, which is what their argument, and ours, relies on.

Two things are added here. First, because the $\Lambda$ terms cancel between
$g_{\hd t\hd t}$ and $g_{\hd r\hd r}$, the same conclusion survives verbatim
for arbitrary $\Lambda$, which \eqref{eq:AL-1920} does not address. Second, the
argument is now embedded in the exhaustive classification of
Proposition~\ref{prop:branches}: the ansatz \eqref{eq:AL-ansatz} contains only
$F_{tr}$ and $F_{\varphi r}$, so their non-existence analysis rules out the
coexistence of $E$ and $B$ but does not consider the azimuthal electric branch
at all, the one taken up in Sec.~\ref{sec:electric}. The field equations determine the metric completely, up to \eqref{eq:radial-solution}, before the electromagnetic field equation
 \eqref{eq:gauss}(i) is even invoked, so no
choice of $\LL(F)$ can rescue traversability. The freedom left by
\eqref{eq:maxwell}, the relation $rE\LL_F=q_e$ of \eqref{eq:gauss}, only
fixes how the charge sources $b(r)$ through the remaining field equation; it
plays no role in \eqref{eq:radial-solution} and cannot undo the horizon. In this
sense the branch does not fail to produce new geometry: with $b(r)=r_0$ or
$b(r)=2M$ it reproduces the charged black holes of Ref.~\cite{CataldoGarcia}
and its nonlinear generalisations, which is what the field equations of this
branch generically describe once a horizon rather than a throat is accepted at
$r=r_0$.

With Theorem~\ref{thm:radial} in hand, the radial electric branch is closed:
it saturates the null energy condition, its metric is fixed in the sense of
\eqref{eq:radial-solution}, and it admits no traversable wormhole for any
$\LL(F)$. The remaining two branches, taken up next, evade this obstruction
each for a different reason, since $\mathcal{E}^{2}\LL_F$ and $B^{2}\LL_F$ need
not vanish in \eqref{eq:NECbranches}.

\section{The azimuthal electric branch} \label{sec:electric}

In this branch $E=B=0$ and $\mathcal{E}\neq0$, fixed by
Proposition~\ref{prop:branches}. Unlike the other two branches, the first of
\eqref{eq:maxwell} imposes no condition here, and $\mathcal{E}(r)$ is fixed
instead by the Bianchi identity, \eqref{eq:bianchi2}: writing the integration
constant as $Q$,
\begin{equation}
\mathcal{E}(r)=\frac{Q}{r\,e^{\Phi(r)}} .
\label{eq:azimuthal-field}
\end{equation}

Setting $E=B=0$ in \eqref{eq:coupled-tt}--\eqref{eq:coupled-pp} leaves
$\LL(r)$ and $\mathcal{E}^{2}(r)\LL_F(r)$ to be read off from three equations
for two unknowns. The redundancy is not accidental: adding
\eqref{eq:coupled-tt} and \eqref{eq:coupled-pp}, the matter content cancels
identically between them exactly as in Sec.~\ref{sec:radial}, and so does
$\Lambda$, since $g_{\hd t\hd t}=-1$ and $g_{\hd\varphi\hd\varphi}=+1$ once
more. What remains is a purely geometric constraint valid for arbitrary $\Lambda$:
\begin{equation}
(rb'-b)\,\frac{1-r\Phi'}{2r^{3}}+\Big(1-\frac{b}{r}\Big)\big[\Phi''+(\Phi')^{2}\big]=0 .
\label{eq:azimuthal-master}
\end{equation}

\begin{theorem}[Freedom of the shape function in the azimuthal electric branch]
\label{thm:rigidity}
Let $\Lambda$ be arbitrary. Then the shape function $b(r)$ is unconstrained by
Eq.~\eqref{eq:azimuthal-master} if and only if
\begin{equation}
e^{\Phi(r)}=Cr ,
\label{eq:rigid}
\end{equation}
with $C>0$ a constant that can be set to unity by a constant rescaling of $t$.
\end{theorem}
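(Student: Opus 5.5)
The plan is to read Eq.~\eqref{eq:azimuthal-master} as a relation that is linear in $b$, and to ask when it holds identically in $b$. Write $g\equiv e^{\Phi}$, so that $\Phi'=g'/g$ and $\Phi''+(\Phi')^{2}=g''/g$. Multiplying \eqref{eq:azimuthal-master} by $2r^{3}g$ gives
\begin{equation*}
(rb'-b)(g-rg')+2r^{2}(r-b)\,g''=0 .
\end{equation*}
Collecting terms, this becomes
\begin{equation*}
b'\,r(g-rg')-b\,\big[(g-rg')+2r^{2}g''\big]+2r^{3}g''=0 .
\end{equation*}
This is a first-order linear ODE for $b$ whose coefficients are fixed by $\Phi$. "Unconstrained" should mean that every function $b(r)$ satisfies it, or at least a family of $b$ whose values and derivatives at a point can be prescribed independently.

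For the forward direction, assume the equation holds for all $b$. Taking $b$ to be two functions whose pairs $(b,b')$ at a given $r$ are linearly independent, for example $b=0$ and $b=r$, forces each coefficient to vanish separately:
\begin{equation*}
r(g-rg')=0,\qquad (g-rg')+2r^{2}g''=0,\qquad 2r^{3}g''=0 .
\end{equation*}
The first equation gives $g=rg'$, so $g=Cr$. That already implies $g''=0$, so the other two equations hold as well. Positivity of $e^{\Phi}$ with $r>0$ gives $C>0$, and the rescaling $t\to t/C$ absorbs $C$.

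For the converse, substitute $e^{\Phi}=Cr$ directly. Then $\Phi'=1/r$ and $\Phi''+(\Phi')^{2}=-1/r^{2}+1/r^{2}=0$, so both terms of \eqref{eq:azimuthal-master} vanish for every $b$. The independence from $\Lambda$ comes for free, since $\Lambda$ already cancelled in deriving \eqref{eq:azimuthal-master}.

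The only real subtlety is in making "unconstrained" precise. If $\Phi\neq\ln(Cr)$, then $g-rg'\neq0$ on some open set. There the equation is a genuine first-order ODE that determines $b$ from a single initial value, so $b$ is not free. If instead $g-rg'$ vanishes only at isolated points, continuity reduces the question to the open set where it is nonzero. I would state this case split explicitly, since it is what the later "generic regime" relies on.
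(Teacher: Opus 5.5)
Your overall strategy is the paper's. The theorem turns on the coefficient of $b'$ in \eqref{eq:azimuthal-master}, which is $(1-r\Phi')/(2r^{2})$, or $r(g-rg')$ after your multiplication by $2r^{3}g$. Your closing paragraph is essentially the paper's proof of the ``only if'' direction: if $\Phi\neq\ln(Cr)$, then $g-rg'\neq0$ near some $r_*$, and there the equation fixes $b'$ from $b$. Your algebra and the ``if'' direction are also correct.

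The step that fails is the test-function argument. Your relation is affine in $(b,b')$, not linear, because it carries the inhomogeneous term $2r^{3}g''$. Separating three coefficients therefore needs three test functions whose $(b,b')$ data are affinely independent. Your pair $(0,0)$, $(r,1)$ does not even meet your own linear-independence criterion.

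Worse, $b=r$ satisfies the equation for every $\Phi$; it is the $c_1=0$ member of \eqref{eq:b-generic}. Substituting it gives $r(g-rg')-r(g-rg')-2r^{3}g''+2r^{3}g''\equiv0$. So $b=0$ and $b=r$ together yield only $g''=0$, i.e.\ $e^{\Phi}=Ar+B$, not $e^{\Phi}=Cr$. For instance, $e^{\Phi}=r+1$ passes both of your tests. Yet it reduces the equation to $rb'-b=0$, which $b=1$ violates.

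The repair is immediate. One option is to add a nonzero constant test function, $b=1$; together with $g''=0$ it gives $g-rg'=0$, hence $g=Cr$. The other is to argue as the paper does: if $b(r_*)$ and $b'(r_*)$ can be prescribed independently, the coefficient of $b'$ must vanish at every $r_*$. With either fix, or simply by letting your final paragraph carry the ``only if'' direction on its own, the proof is complete and coincides with the paper's.
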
 
\begin{proof}
The coefficient of $b'$ in Eq.~\eqref{eq:azimuthal-master} is
$(1-r\Phi')/(2r^{2})$. Hence, for $\Phi'=1/r$, this coefficient vanishes
identically, and so does the entire term proportional to $(rb'-b)$. The
remaining equation is $(1-b/r)\bigl[\Phi''+(\Phi')^{2}\bigr]=0$,
which is identically satisfied because $\Phi''+(\Phi')^{2}=0$. Therefore,
Eq.~\eqref{eq:azimuthal-master} imposes no restriction on the shape function
$b(r)$.

Conversely, if \eqref{eq:rigid} fails, then $\Phi(r)\neq\ln(Cr)$ for any
constant $C$, so $\Phi'(r)=1/r$ cannot hold at every $r$: there exists some
$r_*$ where $w(r_*)\equiv1-r_*\Phi'(r_*)\neq0$. There the coefficient of
$b'$ does not vanish, so Eq.~\eqref{eq:azimuthal-master} determines
$b'(r_*)$ once $b(r_*)$ is specified, rather than leaving them independent:
$b(r)$ cannot be prescribed freely at $r_*$, and hence is not
unconstrained. This is worked out in full in Sec.~\ref{sec:electric}.C, where solving
\eqref{eq:azimuthal-master} away from \eqref{eq:rigid} gives $b(r)$ fixed
up to a single constant (Eq.~\eqref{eq:b-generic} and
Proposition~\ref{prop:generic-nothroat}).

\end{proof}

With Theorem~\ref{thm:rigidity} fixing $\Phi$, \eqref{eq:coupled-rr} and
\eqref{eq:coupled-pp} reduce to a single relation between $\LL(r)$,
$\LL_F(r)$ and $b(r)$; it can be solved in either direction, and each is
developed in its own subsection below.

\subsection{The Lagrangian from the shape function, $e^{\Phi}=Cr$}

With \eqref{eq:rigid} and $C=1$, \eqref{eq:coupled-rr} and
\eqref{eq:coupled-pp} give $\LL(r)$ and $\mathcal{E}^{2}(r)\LL_F(r)$ directly,
using $\mathcal{E}^{2}=Q^{2}/r^{4}$ from \eqref{eq:azimuthal-field}, for
arbitrary $\Lambda$:
\begin{equation}
\LL(r)=\frac{1}{8\pi r^{2}}\Big(1-\frac{b}{r}\Big)+\frac{\Lambda}{8\pi} ,
\label{eq:L-of-r}
\end{equation}
\begin{equation}
\LL_F(r)=\frac{r\big[3b(r)-rb'(r)-2r\big]}{16\pi Q^{2}} .
\label{eq:LF-of-r}
\end{equation}
$\Lambda$ enters only as an additive shift of $\LL(r)$; $\LL_F(r)$, and with
it the whole null-energy structure developed below, is exactly as it would
be with $\Lambda=0$, consistent with the general remark of
Sec.~\ref{sec:setup} that a constant piece of $\LL$ is degenerate with a
shift of $\Lambda$. This is already a first structural fact worth recording: in this direction
of the construction, $b(r)$ is the free input and $\Lambda$ is fully absorbed into $\LL$. Refs.~\cite{CanateBreton,MHG}, by contrast, both require $\Lambda<0$ for their fixed $k=1/2$ Lagrangian to admit a throat at all. In such a way we have shown that in this case a throat exists for $b(r)$ freely chosen and $\Lambda$ arbitrary, including $\Lambda=0$.

Two consequences follow from the throat condition $b(r_0)=r_0$ alone, for
\emph{any} admissible $b(r)$:
\begin{equation}
\LL(r_0)=\frac{\Lambda}{8\pi} ,\qquad
\LL_F(r_0)=\frac{r_0^{2}\big[1-b'(r_0)\big]}{16\pi Q^{2}}>0 ,
\label{eq:throat-values}
\end{equation}
the second strictly positive by the flare-out condition \eqref{eq:flareout}
and, notably, independent of $\Lambda$. Together with the manifestly finite
$\mathcal{E}(r_0)=Q/r_0^{2}$, the field, $\LL$ and $\LL_F$ are all finite at
the throat; and by the second case of \eqref{eq:NECbranches}, $\LL_F(r_0)>0$
means the null energy condition is violated there, as a throat requires,
again regardless of $\Lambda$.

Because $F=-\mathcal{E}^{2}/2=-Q^{2}/(2r^{4})$ depends on $r$ only through
$\mathcal{E}$, and not on $b(r)$, it inverts independently of the choice of
shape function,
\begin{equation}
r(F)=\Big(\frac{Q^{2}}{-2F}\Big)^{1/4} ,
\label{eq:r-of-F}
\end{equation}
turning \eqref{eq:L-of-r} into a closed-form reconstruction of the Lagrangian
for \emph{any} freely specified $b(r)$ and arbitrary $\Lambda$:
\begin{equation}
\LL(F)=\frac{1}{8\pi\,r(F)^{2}}\left[1-\frac{b\big(r(F)\big)}{r(F)}\right]+\frac{\Lambda}{8\pi} .
\label{eq:reconstruction}
\end{equation}
Since $r(F)\to\infty$ as $F\to0^{-}$, the asymptotic condition
$b(r)=o(r)$, namely
\[
\lim_{r\to\infty}\frac{b(r)}{r}=0,
\]
implies that the bracket converges to unity, so $\LL(F)\to\Lambda/(8\pi)$: with $\Lambda\neq0$ the weak-field
Lagrangian approaches this constant, not zero, and $\Lambda$ is read off
directly from the exterior value of $\LL$. Only for $\Lambda=0$ does the
subleading term become visible,
\begin{equation}
\LL(F)\to\frac{\sqrt{-2F}}{8\pi Q} \qquad (\Lambda=0) ,
\label{eq:weakfield}
\end{equation}
independently of $b(r)$, with the leading weak-field power carrying a
positive coefficient. In either case \eqref{eq:LF-of-r} is unaffected by
$\Lambda$ and turns negative once $b(r)$ falls behind $r$, restoring the null
energy condition away from the throat, for any $b(r)$ growing slower than
$r$.

\emph{Example.} As an application of the procedure just described, take the
simplest admissible shape function, $b(r)=r_0$ constant. This is the same shape function taken up again in Sec.~\ref{sec:electric}.C, where it is shown to admit other redshift functions besides $\Phi=\ln(Cr)$. Conditions \eqref{eq:domain} and \eqref{eq:flareout} hold trivially
throughout $r>r_0$, for any $\Lambda$, and Eqs.~\eqref{eq:L-of-r}--\eqref{eq:LF-of-r}
give
\begin{equation}
\LL(r)=\frac{r-r_0}{8\pi r^{3}}+\frac{\Lambda}{8\pi} ,\qquad
\LL_F(r)=\frac{r(3r_0-2r)}{16\pi Q^{2}}. \label{Eq.45}
\end{equation}

Equation \eqref{eq:reconstruction} then reconstructs the Lagrangian in
closed form,
\begin{equation}
\LL(F)=\underbrace{\frac{\sqrt{-2F}}{8\pi Q}}_{\text{universal}}
-\underbrace{\frac{r_0(-2F)^{3/4}}{8\pi Q^{3/2}}}_{\text{sourced by the throat}}
+\underbrace{\frac{\Lambda}{8\pi}}_{\text{constant}} .
\label{eq:L-explicit}
\end{equation}
The first term of \eqref{eq:L-explicit} is universal: it comes entirely
from the first term of Eq. \eqref{eq:reconstruction}, which carries no dependence on $b(r)$
at all, so every reconstruction in this branch produces exactly
$\sqrt{-2F}/(8\pi Q)$, regardless of which shape function was chosen. The
second term comes from the $-b(r)/r$ part, and its power depends on $b(r)$
itself; for the constant $b(r)=r_0$ used here it is $(-2F)^{3/4}$, but a
differently chosen $b(r)$ would produce a different power in its place.
This second term is what carries $r_0$, and vanishes if $r_0=0$: it is
generated specifically by this choice of throat; the third is simply $\Lambda$, as already
seen in \eqref{eq:L-of-r}.

Having reconstructed $\LL(F)$, the quantity that actually controls the null
energy condition is its derivative, $\LL_F(r)$, already positive at the
throat by \eqref{eq:throat-values}. Whether that violation persists away from the throat depends on the sign of
$\LL_F(r)$ itself. From the second expression of \eqref{Eq.45}, we see that $\LL_F(r)$ is positive for $r_0\le r<\tfrac32r_0$ and negative for $r>\tfrac32r_0$,
crossing zero once at
\begin{equation}
r_1=\tfrac32 r_0 .
\end{equation}
The exotic matter needed to hold the throat open is therefore confined to
the finite shell $r_0\le r<r_1$; beyond $r_1$ the same field satisfies the
null energy condition, as ordinary matter does. Since $\LL_F$ carries no
dependence on $\Lambda$, this shell boundary is the same for every
$\Lambda$.

The first term of \eqref{eq:L-explicit} contributes nothing to
$\LL-2F\LL_F$, which is precisely the combination
$\LL+\mathcal{E}^{2}\LL_F$ appearing in \eqref{eq:coupled-tt}, the field
equation coupling $b(r)$ and $\LL(F)$. From
\eqref{eq:L-of-r}--\eqref{eq:LF-of-r},
\begin{equation}
\LL+\mathcal{E}^{2}\LL_F = \frac{\Lambda}{8\pi}-\frac{rb'-b}{16\pi r^{3}} .
\end{equation}
The universal piece of \eqref{eq:L-explicit} is therefore inert as
far as the shape function is concerned: it drops out of this combination
completely, whatever its coefficient, and plays no role in sourcing the
throat. All of the work is done by the remaining, non-degenerate part of the
reconstructed Lagrangian: here, the $(-2F)^{3/4}$ term generated by the
throat itself. Section~IV.B examines the same combination from the opposite
direction, fixing $\LL(F)$ first and solving for $b(r)$.

\subsection{The shape function from the Lagrangian, $e^{\Phi}=Cr$}

The relation between \eqref{eq:L-of-r} and \eqref{eq:reconstruction} is
equally well read the other way. In Sec.~\ref{sec:electric}.A, the shape function $b(r)$
was the free input and $\LL(F)$ the result; here the two roles simply
reverse, and $\LL(F)$ may be chosen first, for any $\Lambda$, with $b(r)$
solved for directly. This reversal is only meaningful because $\LL(F)$ is
left genuinely unrestricted, free to be any function with $\LL_F\neq0$:
choosing it is now the free step, mirroring exactly the freedom $b(r)$
had in Sec.~\ref{sec:electric}.A.

Inverting \eqref{eq:L-of-r},
\begin{equation}
b(r)=r-8\pi r^{3}\LL\big(F(r)\big)+\Lambda r^{3} ,
\label{eq:b-of-LL}
\end{equation}
with $F(r)=-Q^{2}/(2r^{4})$ from \eqref{eq:azimuthal-field} as before,
independently of $b(r)$: any $\LL(F)$ with $\LL_F\neq0$ and any $\Lambda$,
substituted into \eqref{eq:b-of-LL}, generates a solution of the full coupled
system through Theorem~\ref{thm:rigidity} alone. Here $\Lambda$ no longer
drops out: for a fixed, given $\LL(F)$, it genuinely reshapes $b(r)$, as the
example below shows.

Equation \eqref{eq:b-of-LL} makes the location of the throat a joint property
of the chosen Lagrangian and $\Lambda$: $b(r_0)=r_0$ holds if and only if
\begin{equation}
\LL\big(F(r_0)\big)=\frac{\Lambda}{8\pi}.
\label{eq:throat-from-LL}
\end{equation}
Thus, a throat exists whenever the chosen Lagrangian crosses the level $\Lambda/(8\pi)$, which coincides with the zero of $\LL$ only for $\Lambda=0$. The throat is then located at the radius $r_0$ corresponding to that crossing through $F(r)=-Q^{2}/(2r^{4})$. Differentiating \eqref{eq:b-of-LL} and using \eqref{eq:throat-from-LL} to
eliminate $\LL(F(r_0))$, the explicit $\Lambda r^{3}$ term and the one it
removes cancel exactly, leaving
\begin{equation}
b'(r_0)=1-\frac{16\pi Q^{2}}{r_0^{2}}\,\LL_F\big(F(r_0)\big) ,
\end{equation}
with no leftover dependence on $\Lambda$ at all: setting $\Lambda=0$ from
the outset would have produced this same expression term by term, since
$\LL_F$ never carried any $\Lambda$-dependence to begin with, as already
noted in Sec.~\ref{sec:electric}.A. The flare-out condition \eqref{eq:flareout} therefore holds if and only if
$\LL_F(F(r_0))>0$, regardless of $\Lambda$: the Lagrangian must be
\emph{increasing} as it crosses the level \eqref{eq:throat-from-LL}, not
just crossing it. By \eqref{eq:NECbranches}, that same sign,
$\LL_F(F(r_0))>0$, is exactly what makes the null energy condition
violated at $r_0$. A throat and a genuine violation of the null energy condition at that
throat remain the same requirement on $\LL(F)$, regardless of $\Lambda$. The
domain condition \eqref{eq:domain} for $r>r_0$ becomes, through
\eqref{eq:b-of-LL}, $\LL(F(r))>\Lambda/(8\pi)$ throughout $F\in(F(r_0),0)$:
the Lagrangian must stay above that level all the way out to $F=0$.

\emph{Example: the single power law.} In Sec.~\ref{sec:electric}.A, $\Lambda$ never affected
the geometry: $b(r)$ was the free input, and $\Lambda$ only shifted $\LL(r)$
additively. To see that this is a feature of that direction of the
construction, not a general fact, fix the Lagrangian once and for all,
\begin{equation}
\LL(F)=-\alpha(-2F)^{k} ,
\label{eq:power-law}
\end{equation}
the ansatz used throughout the literature on this branch, with $\alpha,k>0$
held fixed from here on. Here $\LL_F(F)=2\alpha k(-2F)^{k-1}$, positive for
every $F<0$, so flare-out holds automatically wherever a throat exists, by
the general criterion above. With $F(r)=-Q^{2}/(2r^{4})$ from
\eqref{eq:azimuthal-field}, \eqref{eq:b-of-LL} gives
\begin{equation}
b(r) = r + \Lambda r^{3} + 8\pi\alpha Q^{2k}\,r^{3-4k} ,
\label{eq:general-k-b}
\end{equation}
with only two free constants left, $\Lambda$ and $\alpha$: unlike
Sec.~\ref{sec:electric}.A, where $b(r)$ could be any function, fixing $k$ fixes the entire
functional form of $b(r)$, for any $k$, leaving nothing but these two
constants to adjust. Here $\Lambda$ no longer drops out, as it did in
Sec.~\ref{sec:electric}.A: it appears explicitly in \eqref{eq:general-k-b} and genuinely
reshapes the geometry. The throat condition \eqref{eq:throat-from-LL} fixes
\begin{equation}
\Lambda = -8\pi\alpha\Big(\frac{Q}{r_0^{2}}\Big)^{2k} ,
\label{eq:general-k-Lambda}
\end{equation}
strictly negative for every $k>0$: for this fixed theory, choosing
$\Lambda<0$ locates the throat through \eqref{eq:general-k-Lambda}, while
$\Lambda=0$ admits no throat at all. Note also that, since $\LL_F(F)>0$ for
every $F<0$, the null energy condition is violated throughout the exterior
rather than confined to a finite shell as in the example of Sec.~\ref{sec:electric}.A; confining
it would require $\LL(F)$ to stop increasing before $F=0$, which no single
power does. Shell confinement is therefore not a generic feature of this
direction of the construction.

\emph{The case $k=1/2$.} Taking $k=1/2$ in \eqref{eq:power-law} recovers the
Lagrangian used, up to normalisation, by Refs.~\cite{CanateBreton,MHG}.
Equation \eqref{eq:general-k-b} then reads
\begin{equation}
b(r)=\mu\,r+\Lambda r^{3} ,\qquad \mu\equiv1+8\pi\alpha Q ,
\label{eq:khalf-b}
\end{equation}
and \eqref{eq:general-k-Lambda} becomes $\Lambda=-8\pi\alpha Q/r_0^{2}$.
Note that the field's own contribution to $b(r)$ has the same radial
dependence as the term already present at $\Lambda=0$, and the two merge
into the single term $\mu r$ instead of remaining distinct. For any other
$k$, the third term of \eqref{eq:general-k-b} keeps its own power of $r$, so
$b(r)$ has three genuinely different radial dependences rather than two;
\eqref{eq:general-k-Lambda} still forces $\Lambda<0$ regardless. For
$k=1/2$ specifically, $\mu>1$ identically, for any $\alpha,Q>0$, which is
the merger's own signature: the throat condition gives
$r_0^{2}=(1-\mu)/\Lambda$, real and positive only for $\Lambda<0$, and at
$\Lambda=0$ \eqref{eq:khalf-b} reduces to $b(r)=\mu r$ with $\mu\neq1$,
which has no throat at finite $r$ at all. This is exactly the restriction
reported by Refs.~\cite{CanateBreton,MHG}, now seen as one instance of a
general obstruction rather than a pathology special to $k=1/2$: fixing
$\LL(F)$ to any single power, whichever it is, leaves $b(r)$ determined and
forces $\Lambda\neq0$; escaping this requires abandoning the restriction to
a single power altogether, which is exactly what letting $\LL(F)$ vary
freely, as in this subsection, achieves.

Equation \eqref{eq:khalf-b} is also recognisable: writing $M\equiv\mu-1=8\pi\alpha Q$,
it reads $b(r)=r(1+M+\Lambda r^{2})$, precisely the mass function of the
static BTZ black hole. This coincidence is worth spelling out carefully,
since $b(r)$ itself does not decide which redshift function accompanies
it; that is fixed independently, by whatever sources the geometry.

In vacuum, $\LL\equiv0$, so $T_{\hd\mu\hd\nu}\equiv0$ for any field
configuration, and the same $G_{\hd t\hd t}+G_{\hd r\hd r}=0$ mechanism
behind \eqref{eq:radial-solution} applies regardless of $b(r)$: it forces
$e^{2\Phi}=1-b/r$, not \eqref{eq:rigid}. Only once this is settled does
substituting $b(r)=r(1+M+\Lambda r^{2})$ show what it gives: a horizon at
$r_0$ for $M>0$, the BTZ black hole itself.

In the azimuthal branch with a genuine field, $\LL_F\neq0$,
Theorem~\ref{thm:rigidity} instead forces $e^{\Phi}=Cr$, again
independently of $b(r)$ and before any Lagrangian or shape function is
chosen. Only afterwards, solving \eqref{eq:coupled-rr} for the specific
choice $\LL(F)=-\alpha\sqrt{-2F}$, does $b(r)$ turn out to be this same
function \eqref{eq:khalf-b}.

The comparison is therefore between two independent constructions that
happen to produce the same $b(r)$: vacuum forces $e^{2\Phi}=1-b/r$ and gives
a horizon at $r_0$; the nonlinear electric field forces $e^{\Phi}=Cr$
instead and gives a genuine, traversable throat at the same $r_0$. The
field does not move where the vacuum horizon would have sat, it removes it
by replacing the redshift function that would have produced it.

Traversable wormholes with an azimuthal nonlinear electric field have so
far been reported only for the power-Maxwell Lagrangian $\LL\propto|F|^{k}$
with the single exponent $k=1/2$: Refs.~\cite{CanateBreton,MHG}. In our
work the exponent is kept arbitrary, and the solution
\eqref{eq:general-k-b} is obtained for any $k$; the geometries of those two
works are recovered as the particular case $k=1/2$, in which the field's
contribution to $b(r)$ degenerates into the term already present at
$\Lambda=0$, the same degeneracy $\LL-2F\LL_F=0$ noted in the
Introduction, here showing up as the field-sourced term $r^{3-4k}$
collapsing onto the vacuum term $r$.

\subsection{The generic branch, $\Phi'\neq1/r$}
The constructions of Secs.~\ref{sec:electric}.A and \ref{sec:electric}.B rest entirely on \eqref{eq:rigid}.
Before moving to the magnetic branch, it is worth asking how essential that
choice really is: does every other redshift function fail to support a
throat, or did Theorem~\ref{thm:rigidity} merely pick out the most
convenient one? Away from \eqref{eq:rigid}, \eqref{eq:azimuthal-master} is a
genuine first-order linear ODE for $b(r)$, for any given $\Phi(r)$; its
general solution is
\begin{equation}
b(r) = \frac{c_1\,r\,e^{-2\Phi(r)}}{w(r)^{2}} + r , \qquad
w(r)\equiv1-r\Phi'(r) ,
\label{eq:b-generic}
\end{equation}
with $c_1$ a single integration constant. Unlike the case of Theorem~\ref{thm:rigidity}, $b(r)$ is here fixed
up to one constant, not a free function; \eqref{eq:b-generic} degenerates
back into the freedom of \eqref{eq:rigid} exactly where $w(r)=0$.

With \eqref{eq:b-generic}, Eqs.~\eqref{eq:coupled-tt} and \eqref{eq:coupled-rr} give, for arbitrary $\Lambda$,
\begin{equation}
\LL(r) = \frac{\Lambda}{8\pi} -\frac{c_1\,\Phi'(r)\,e^{-2\Phi(r)}}{8\pi r\,w(r)^{2}} ,
\label{eq:L-generic}
\end{equation}
\begin{equation}
\LL_F(r) = -\frac{c_1\,r\big[2r(\Phi')^{2}+r\Phi''-\Phi'\big]}{8\pi
Q^{2}\,w(r)^{3}} .
\label{eq:LF-generic}
\end{equation}
As in Sec.~\ref{sec:electric}.A, $\Lambda$ enters only as an additive
shift of $\LL(r)$; $\LL_F(r)$ carries no dependence on $\Lambda$ at all,
since the $\Lambda$-terms cancel identically between
\eqref{eq:coupled-tt} and \eqref{eq:coupled-rr} when solving for it. Both
terms of \eqref{eq:LF-generic} are proportional to $c_1$; at $c_1=0$,
$b(r)\equiv r$, so $1-b/r\equiv0$ and the metric is singular everywhere,
not a spacetime at all. $\LL_F(r)$ carries no explicit dependence on
$\Phi$ beyond $\Phi'$ and $\Phi''$.

\begin{proposition}[No throat in the generic branch]
\label{prop:generic-nothroat}
If $c_1\neq0$ and $w(r)\neq0$ throughout the domain, then $b(r)\neq r$ for
every $r$: no throat exists.
\end{proposition}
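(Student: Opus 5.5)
The plan is to read the conclusion directly off the closed form \eqref{eq:b-generic}. I first rewrite it as
\begin{equation*}
b(r)-r=\frac{c_1\,r\,e^{-2\Phi(r)}}{w(r)^{2}} ,
\end{equation*}
so that the throat condition $b(r_0)=r_0$ is equivalent to the vanishing of the right-hand side at $r_0$. I then examine the factors of that right-hand side one at a time, on the domain $r>0$ on which the Morris--Thorne ansatz \eqref{eq:metric} is defined.

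\begin{itemize}
\item The constant $c_1$ is nonzero by hypothesis.
\item The factor $r$ is strictly positive.
\item Wherever $\Phi$ is finite, $e^{-2\Phi(r)}$ is strictly positive and finite.
\item By hypothesis $w(r)\neq0$, so $w(r)^{2}$ is strictly positive and finite wherever $\Phi'$ is finite.
\end{itemize}

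The quotient is therefore a nonzero real number at every such $r$. Hence $b(r)-r\neq0$ there, and no throat exists.

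To make the statement sharp, I would add that the sign of $b-r$ is locked to the sign of $c_1$ throughout the domain. With $c_1<0$ one has $b<r$ everywhere, so \eqref{eq:domain} holds but $1-b/r$ never reaches zero. With $c_1>0$ one has $b>r$ everywhere, and the metric has the wrong signature. In neither case can there be a throat, consistent with the statement.

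The only delicate step is justifying that the regularity assumptions cover every point where a throat could occur. The factor $e^{-2\Phi}/w^{2}$ could vanish only if $\Phi\to+\infty$ or if $|w|\to\infty$, that is, $|\Phi'|\to\infty$. Either would make the redshift singular at $r_0$, which is exactly the boundary of the validity of \eqref{eq:b-generic} that the paper says is reached in the explicit solutions. I would argue that both possibilities are excluded by the standing requirement \eqref{eq:nohorizon} together with the smoothness of $\Phi$ implicit in deriving \eqref{eq:azimuthal-master}. Accordingly, I would state the proposition for $\Phi\in C^{2}$ and finite on the domain, and note explicitly that the loophole is a singular, though possibly controlled, behaviour of $\Phi$ at $r_0$.

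Finally, I would remark that $b(r_0)=r_0$ with $w(r_0)=0$ is also compatible with the formula. That case is excluded here by hypothesis and belongs to Theorem~\ref{thm:rigidity}.
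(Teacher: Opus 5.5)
Your argument is correct and is essentially the paper's: both read off from \eqref{eq:b-generic} that $b-r$ (equivalently $1-b/r$) has the fixed sign of $c_1$ (resp.\ $-c_1$) because $e^{-2\Phi}>0$ and $w^{2}>0$. Your finiteness requirement on $\Phi$ and $\Phi'$ at a candidate throat is the same caveat the paper records right after its proof. Note, however, that it is a genuine extra hypothesis and does not follow from \eqref{eq:nohorizon}: the solution \eqref{eq:Phi-const-b} of Sec.~\ref{sec:electric}.C has finite $\Phi$ but divergent $\Phi'$ at a real throat. Your closing remark is also slightly off: at an isolated zero of $w$ with $\Phi$ finite, \eqref{eq:b-generic} gives $|b-r|\to\infty$ for $c_1\neq0$ rather than $b=r$, and Theorem~\ref{thm:rigidity} covers only $w\equiv0$, not a pointwise zero.
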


\begin{proof}
Equation~\eqref{eq:b-generic} implies $1-\frac{b(r)}{r}=-\frac{c_1e^{-2\Phi(r)}}{w(r)^2}$.
Since $e^{-2\Phi(r)}$ and $w(r)^2$ are strictly positive wherever
$w(r)\neq0$, the right-hand side is sign-definite, with sign determined solely
by $-c_1$. It follows that $1-b(r)/r$ cannot vanish, and therefore
$b(r)\neq r$ for all $r$.
\end{proof}

A throat can therefore only appear at a point $r_0$ where $w(r_0)$ diverges, that is, where $\Phi'(r_0)\to\infty$. Equation \eqref{eq:b-generic} was derived assuming $w$ finite, so it no longer applies there; such a point must instead be examined directly in \eqref{eq:azimuthal-master}.

\emph{The case $b(r)=r_0$ constant.} Take $b(r)=r_0$, constant, the
simplest nontrivial choice: with $b'=0$, \eqref{eq:azimuthal-master} reduces
to a second-order ODE for $\Phi(r)$ alone. Its solution is
\begin{equation}
e^{\Phi(r)} = C_1\,r + C_2\sqrt{r(r-r_0)} , \qquad r\geq r_0 ,
\label{eq:Phi-const-b}
\end{equation}
with $C_1,C_2$ integration constants. At $C_2=0$ this is exactly the solution $e^{\Phi}=C_1r$ of Sec.~\ref{sec:electric}.A; for $C_2\neq0$ it is a genuinely different redshift function, still sharing the same constant shape function. Note that $C_1$ cannot be set to zero: doing so makes $e^{2\Phi(r)}$ develop an event horizon at $r_0$ rather than a throat.

Take $C_1>0$. At the throat, $e^{\Phi(r_0)}=C_1r_0$ is finite and positive,
and if also $C_2\geq0$ then $e^{\Phi(r)}\geq C_1r\geq C_1r_0>0$ throughout
$r\geq r_0$, so no horizon appears anywhere.

At first sight the throat looks singular: $\Phi'(r_0)$ and $\Phi''(r_0)$
both diverge, since $\sqrt{r-r_0}$ is not a smooth function of $r$ there.
This is not a curvature singularity, only the coordinate $r$ behaving badly,
exactly as $g_{\hd r\hd r}=1/(1-b/r)$ always does at a Morris--Thorne
throat. Switching to the proper radial distance $\ell$, defined by
$\rd\ell/\rd r=1/\sqrt{1-b/r}$, removes the problem: direct computation gives
\begin{equation}
\left.\frac{\rd\Phi}{\rd\ell}\right|_{r_0} = \frac{C_2}{2C_1r_0} ,
\label{eq:dPhi-dell}
\end{equation}
finite, and $\rd^2\Phi/\rd\ell^2|_{r_0}$ is finite as well. The curvature
components confirm this directly: $G_{\hd r\hd r}(r_0)=0$ and
$G_{\hd\varphi\hd\varphi}(r_0)=1/(2r_0^{2})$ from \eqref{eq:Grr} and
\eqref{eq:Gpp}, both finite. The redshift is singular only in the bad
coordinate $r$, never physically.

The matter content follows from \eqref{eq:coupled-rr}--\eqref{eq:coupled-pp}
directly, without needing \eqref{eq:L-generic}--\eqref{eq:LF-generic} (which
assumed $w$ finite and so does not apply at the throat itself):
\begin{equation}
\LL(r_0)=\frac{\Lambda}{8\pi} , \qquad
\LL_F(r_0)=\frac{C_1^{2}r_0^{2}}{16\pi Q^{2}}>0 .
\label{eq:const-b-throat-values}
\end{equation}
As at the throat of the $e^{\Phi}=Cr$ branch, $L(r_0)=\Lambda/(8\pi)$: the null energy
condition is violated at the throat, as required, and every quantity
checked is finite. So \eqref{eq:Phi-const-b} describes a genuine traversable
wormhole for every $C_2\geq0$, with the same shape function as the example of Sec.~\ref{sec:electric}.A but, for $C_2\neq0$, a different redshift function.

What does not carry over is the closed-form reconstruction of $\LL(F)$.
Because $e^{2\Phi(r)}=\big(C_1r+C_2\sqrt{r(r-r_0)}\big)^{2}$ is no longer
proportional to $r^{2}$, the invariant
\begin{equation}
F(r)=-\frac{Q^{2}}{2r^{2}\big(C_1r+C_2\sqrt{r(r-r_0)}\big)^{2}}
\end{equation}
is not a pure power of $r$, and for $C_2\neq0$ inverting it to get $r(F)$
has no elementary closed form. The reconstruction \eqref{eq:reconstruction}
is a special feature of the $e^{\Phi}=Cr$ branch, where $F\propto r^{-4}$ holds
exactly, not a generic property of every traversable solution in this
branch.

\section{The magnetic branch}\label{sec:magnetic}

In this branch $E=\mathcal{E}=0$ and $B\neq0$, fixed by
Proposition~\ref{prop:branches}.
Setting $E=\mathcal{E}=0$ in \eqref{eq:coupled-rr} and \eqref{eq:coupled-pp}
leaves both with the \emph{same} right-hand side,
$8\pi\LL-8\pi B^{2}\LL_F-\Lambda$; unlike the electric branches, it is this
pair, not \eqref{eq:coupled-tt}, that coincides here. Subtracting them and
dividing by $(1-b/r)$ gives a relation between $\Phi$ and $b(r)$ alone,
matter- and $\Lambda$-independent:
\begin{equation}
2r(r-b)\big[\Phi''+(\Phi')^{2}\big] = \Phi'\big(rb'+2r-3b\big) .
\label{eq:magnetic-master}
\end{equation}

\begin{theorem}[Freedom of the shape function in the magnetic branch]
\label{thm:magnetic-rigidity}
Let $\Lambda$ be arbitrary. Then the shape function $b(r)$ is unconstrained
by Eq.~\eqref{eq:magnetic-master} if and only if
\begin{equation}
\Phi(r) = \Phi_0 ,
\label{eq:magnetic-rigid}
\end{equation}
a constant, which can be set to zero by a constant rescaling of $t$.
\end{theorem}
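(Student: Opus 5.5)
The argument runs in parallel with the proof of Theorem~\ref{thm:rigidity}: I will read Eq.~\eqref{eq:magnetic-master} as an equation whose dependence on $b$ and $b'$ is controlled by $\Phi'$, and locate exactly when that dependence disappears. The first step is to rewrite \eqref{eq:magnetic-master} grouping the $b$-dependent terms:
\begin{equation*}
2r^{2}\big[\Phi''+(\Phi')^{2}\big]-2r\Phi'
= b\,\big\{2r\big[\Phi''+(\Phi')^{2}\big]-3\Phi'\big\}+rb'\,\Phi' .
\end{equation*}
The coefficient of $b'$ is $r\Phi'$, the analogue of $(1-r\Phi')/(2r^{2})$ in the electric case.

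\emph{Sufficiency.} If $\Phi=\Phi_0$, then $\Phi'=\Phi''=0$. Both sides of \eqref{eq:magnetic-master} vanish identically for every $b(r)$, so the equation imposes no restriction. The constant $e^{\Phi_0}$ is absorbed by rescaling $t\to e^{-\Phi_0}t$, exactly as $C$ was absorbed in \eqref{eq:rigid}.

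\emph{Necessity.} Suppose $\Phi$ is not constant. Then there is an $r_*$ in the domain with $\Phi'(r_*)\neq0$, and since $r_*>0$ the coefficient $r_*\Phi'(r_*)$ of $b'$ does not vanish there. In a neighbourhood of $r_*$, \eqref{eq:magnetic-master} can be solved for $b'$ as an affine function of $b$ with coefficients built from $\Phi,\Phi',\Phi''$. It is therefore a first-order linear ODE: $b$ is determined locally by its value at one point, so $b(r)$ cannot be prescribed freely.

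To make this concrete rather than merely local, I would integrate the linear ODE explicitly. Dividing by $r\Phi'$ gives $b'+P(r)\,b=R(r)$ with
\begin{equation*}
P=\frac{2r\big[\Phi''+(\Phi')^{2}\big]-3\Phi'}{r\Phi'},\qquad
R=\frac{2r\big[\Phi''+(\Phi')^{2}\big]-2\Phi'}{\Phi'} .
\end{equation*}
An integrating factor should involve $r^{-3}e^{2\Phi}(\Phi')^{2}$, since $\Phi''/\Phi'$ integrates to $\ln\Phi'$. This yields a one-constant family, the magnetic analogue of \eqref{eq:b-generic} that the paper anticipates as \eqref{eq:magnetic-b-generic}. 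The contradiction then follows exactly as in Theorem~\ref{thm:rigidity}: a freely prescribed $b$ can violate this relation, for instance by choosing $b'(r_*)$ independently of $b(r_*)$.

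The one delicate point is that the hypothesis ``$\Phi$ not constant'' only guarantees $\Phi'\neq0$ at some points, not on the whole domain. It is enough to note that a single point $r_*$ where the $b'$-coefficient is nonzero suffices to show that $b$ is constrained. I would not try to solve globally across zeros of $\Phi'$; that task is deferred to the generic-regime analysis.
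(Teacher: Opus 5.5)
Your proof is correct and follows essentially the paper's argument: the coefficient of $b'$ in \eqref{eq:magnetic-master} is $\pm r\Phi'$, so the equation is vacuous when $\Phi'\equiv0$, and at any $r_*$ with $\Phi'(r_*)\neq0$ it becomes a first-order linear constraint tying $b'(r_*)$ to $b(r_*)$. Your optional integration, with integrating factor $r^{-3}e^{2\Phi}(\Phi')^{2}$ and particular solution $b=r$, correctly reproduces \eqref{eq:magnetic-b-generic}, which the paper derives separately in Sec.~\ref{sec:magnetic}.C rather than inside this proof.
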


\begin{proof}
Expanding \eqref{eq:magnetic-master} as a polynomial in $b$ and $b'$ at
fixed $r$, the coefficient of $b'$ is $-r\Phi'$. For this to vanish
identically, $\Phi'=0$; the coefficient of $b$ is then $-2r\Phi''$, forcing
$\Phi''=0$ as well, and with both derivatives zero the remaining constant
term vanishes automatically. Conversely, if $\Phi'\not\equiv0$, the
coefficient of $b'$ is nonzero at some $r_*$, so \eqref{eq:magnetic-master}
determines $b'(r_*)$ once $b(r_*)$ is specified, and $b(r)$ is not
unconstrained.
\end{proof}

With Theorem~\ref{thm:magnetic-rigidity} fixing $\Phi\equiv0$,
\eqref{eq:coupled-tt} and \eqref{eq:coupled-rr} reduce, as in
Sec.~\ref{sec:electric}, to a single relation between $\LL(r)$, $\LL_F(r)$
and $b(r)$. Here, however, the two directions are not symmetric: the
Gauss-type relation \eqref{eq:gauss}(iii) ties $B(r)$ to $\LL_F$ directly,
with no analogue of the azimuthal branch's $\mathcal{E}(r)=Q/(re^{\Phi})$
fixed by Bianchi alone. Sections~\ref{sec:magnetic}.A and
\ref{sec:magnetic}.B develop each direction in turn, and the asymmetry
between them.

Note that this result confines the traversable wormholes admitted in this branch to the zero-tidal-force class of Morris and Thorne, since a constant $\Phi$ eliminates precisely the tidal-force terms in the geodesic equation for a traveler crossing the wormhole.

\subsection{The Lagrangian from the shape function}

With $\Phi\equiv0$, \eqref{eq:coupled-tt} gives $\LL(r)$ directly, and
combining it with \eqref{eq:coupled-rr} isolates $B^{2}\LL_F$; using
\eqref{eq:gauss}(iii), $\LL_F B=Q_m$ for an integration constant $Q_m$,
these close into
\begin{equation}
B(r) = \frac{b(r)-rb'(r)}{16\pi Q_m r^{3}} ,
\label{eq:magnetic-B}
\end{equation}
\begin{equation}
\LL(r) = \frac{\Lambda}{8\pi} + Q_m B(r) , \qquad
\LL_F(r) = \frac{Q_m}{B(r)} ,
\label{eq:magnetic-LLF}
\end{equation}
for \emph{any} freely chosen $b(r)$ and arbitrary $\Lambda$: unlike the
electric branches, $B(r)$ is not fixed independently of $b(r)$ by Bianchi
alone, so choosing $b(r)$ fixes the field together with the Lagrangian in
one step. The no-horizon condition \eqref{eq:nohorizon} is automatic here,
since $\Phi\equiv0$ is finite everywhere; only the domain and flare-out
conditions on $b(r)$ remain to be checked, exactly as in Sec.~\ref{sec:setup}.

Because $F=B^{2}/2$, \eqref{eq:magnetic-LLF} reads
$\LL(r)=\Lambda/(8\pi)+Q_m\sqrt{2F(r)}$ for every $b(r)$ satisfying
flare-out, $b(r)>rb'(r)$, which makes $B(r)>0$ and the square root
unambiguous. This is already the reconstruction: unlike
Sec.~\ref{sec:electric}.A, where different shape functions produced
genuinely different combinations of powers, every admissible $b(r)$ in this
branch reconstructs the \emph{same} family of nonlinear electrodynamics,
\begin{equation}
\LL(F) = \frac{\Lambda}{8\pi} + Q_m\sqrt{2F} ,
\label{eq:magnetic-reconstruction}
\end{equation}
with the freedom in $b(r)$ carried entirely by $B(r)$ inside $F(r)$, not by
the functional form of $\LL$. What is new from one $b(r)$ to the next is
therefore the geometry, not the theory sourcing it.

\emph{Example.} Take $b(r)=r_0$ constant. Equation \eqref{eq:magnetic-B}
gives $B(r)=r_0/(16\pi Q_m r^{3})$, so \eqref{eq:magnetic-LLF} and
\eqref{eq:magnetic-reconstruction} become
\begin{equation}
\LL(r)=\frac{\Lambda}{8\pi}+\frac{r_0}{16\pi r^{3}} , \qquad
\LL_F(r)=\frac{16\pi Q_m^{2}r^{3}}{r_0} ,
\end{equation}
matching \eqref{eq:magnetic-reconstruction} with $r=(r_0^{2}/512\pi^{2}Q_m^{2}F)^{1/6}$.
At the throat, $\LL(r_0)=\Lambda/(8\pi)+1/(16\pi r_0^{2})$ and
$\LL_F(r_0)=16\pi Q_m^{2}r_0^{2}>0$ for any $Q_m\neq0$, so the null energy
condition is violated there by the third case of \eqref{eq:NECbranches},
as required; and since $\LL_F(r)>0$ for every $r>0$, it stays violated
throughout the exterior, with no finite shell as in the analogous example of Sec.~\ref{sec:electric}.A. Domain and flare-out hold trivially for $b(r)=r_0$, for any
$\Lambda$: as in Sec.~\ref{sec:electric}.A, $\Lambda$ is fully absorbed into $\LL$ and
never required to vanish here.

Equation \eqref{eq:magnetic-reconstruction} also recovers the static family
of Ref.~\cite{MAH}, which fixes $\LL(F)\propto\sqrt{F}$ (their $F$ being
$F_{\mu\nu}F^{\mu\nu}=4F$ in the present convention) with no additive
constant. Every shape function they consider is a special case of this
subsection: their MTtW ansatz is $b(r)=b_0^{2}/r$, and their generalisation
$b(r)=b_0^{\mu+1}/r^{\mu}$; both, and any other $b(r)$ satisfying
flare-out, are covered at once by \eqref{eq:magnetic-B}, with $\Lambda$
left arbitrary rather than fixed to zero as in their analysis. What
\eqref{eq:magnetic-B} adds is therefore not a new theory but the full space
of geometries that theory supports, of which Ref.~\cite{MAH} explored two
one-parameter families.

Comparing \eqref{eq:magnetic-B} directly against Ref.~\cite{MAH} uncovers a
further, independent discrepancy. Their field equations (their Eqs.~19-20)
give $G^{t}_{t}=-(b'-b/r)/(2r^{2})=T^t_t$ with
$T^t_t=(\alpha/\sqrt2)B\sqrt{1-b/r}/r$; solving this pair directly for $B(r)$
gives
\begin{equation}
B(r) = \frac{\sqrt2\big[b(r)-rb'(r)\big]}{2\alpha\,r^{3/2}\sqrt{r-b(r)}} ,
\end{equation}
which, converted to the orthonormal frame used here, has exactly the same
$r^{-3}$ dependence as \eqref{eq:magnetic-B}. Their published relation
(their Eq.~22), however, reads
\begin{equation}
B(r) = -\frac{\sqrt2\big(rb'-b\big)}{2\alpha\,r^{3}\sqrt{1-b/r}} ,
\end{equation}
exactly a factor of $r$ smaller than what their own Eqs.~19-20 imply; the
same extra factor propagates into their explicit examples (their Eqs.~23
and 34). Equation \eqref{eq:magnetic-B} is, up to normalisation of the
coupling constant, the corrected relation.

\subsection{The single power law, $B$ forced constant}
Section~\ref{sec:magnetic}.A fixed nothing about $\LL(F)$ beyond
\eqref{eq:magnetic-reconstruction}; taking the opposite direction, fix
$\LL(F)=\gamma F^{k}$ for some power $k$ instead, and ask what
\eqref{eq:gauss}(iii) allows. With $F=B^{2}/2$,
\begin{equation}
B\,\LL_F\big(B^{2}/2\big) = 2\gamma k\Big(\frac{B^{2}}{2}\Big)^{k-1}\frac{B}{2}
= Q_m ,
\end{equation}
an equation in $B$ alone, with no $r$-dependence at all. For $k=1/2$ this
collapses to a condition on the constants, $\gamma=\sqrt2\,Q_m$, leaving
$B$ free to be any function of $r$, and then recovering Sec.~\ref{sec:magnetic}.A. For every other $k$, by contrast, it fixes $B$ to a constant, since $B$ is independent of $r$,
\begin{equation}
B(r) \equiv B_0 = \Big(\frac{2^{k-1}Q_m}{\gamma k}\Big)^{1/(2k-1)} .
\label{eq:magnetic-B0}
\end{equation}
A fixed power therefore does not leave the field undetermined in the same
way that the pure-power models of Sec.~\ref{sec:electric}.B left $b(r)$
determined up to a single constant. Instead, it completely fixes the radial
dependence of the field, leaving $b(r)$ as the remaining free function, with
\eqref{eq:magnetic-B} interpreted as an ODE for $b$ at fixed $B=B_0$,
\begin{equation}
b(r) = C_1 r - 8\pi Q_m B_0\,r^{3} ,
\label{eq:magnetic-b-other-k}
\end{equation}
with $C_1$ a single integration constant, the same pattern
found for a fixed power in Sec.~\ref{sec:electric}.B, and for a generic
redshift function in Sec.~\ref{sec:electric}.C, now appearing for a fixed
power in this branch instead.

\emph{Example.} Take $k=2$, so $B_0=(Q_m/\gamma)^{1/3}$ from
\eqref{eq:magnetic-B0}. Fixing the throat at $r_0$ in
\eqref{eq:magnetic-b-other-k} gives $C_1=1+8\pi Q_mB_0r_0^{2}$ and
\begin{equation}
b(r) = r + 8\pi Q_mB_0\,r_0^{2}r - 8\pi Q_mB_0\,r^{3}.
\end{equation}
Direct evaluation confirms $b(r_0)=r_0$, and
\begin{gather}
b'(r_0) = 1-16\pi Q_mB_0r_0^{2} < 1 , \\
1-\frac{b(r)}{r} = 8\pi Q_mB_0\big(r^{2}-r_0^{2}\big) > 0 \quad (r>r_0) ,
\end{gather}
both automatic for any $Q_m,B_0>0$, exactly as in
the fixed-power examples of Sec.~\ref{sec:electric}.B. Although the
domain condition above holds for every $r>r_0$, the embedding function,
$z'(r)=\sqrt{b(r)/(r-b(r))}$, is not defined everywhere: since
$r-b(r)=8\pi Q_mB_0\,r(r^{2}-r_0^{2})>0$ throughout, it is $b(r)$ itself
that decides the sign, and $b(r)$ changes sign at
\begin{equation}
r_{\max} = \sqrt{r_0^{2}+\frac{1}{8\pi Q_mB_0}} ,
\label{eq:magnetic-k2-rmax}
\end{equation}
beyond which $z'(r)$ becomes imaginary and the embedding in $\mathbb E^{3}$
ceases to apply, even though the wormhole itself, and the field sourcing
it, remain perfectly well defined for every $r>r_0$. Unlike the domain
condition itself, $z(r)$ has no elementary closed form; \eqref{eq:magnetic-k2-rmax}
reduces the problem to a single quadrature,
\begin{equation}
z(r) = \int_{r_0}^{r} \sqrt{\frac{b(r')}{r'-b(r')}}\;\rd r' ,
\label{eq:magnetic-k2-z}
\end{equation}
which we evaluate numerically. Figure~\ref{fig:conformal-embedding-k2}
shows the result.

\begin{figure*}[t]
\centering
\includegraphics[width=\linewidth]{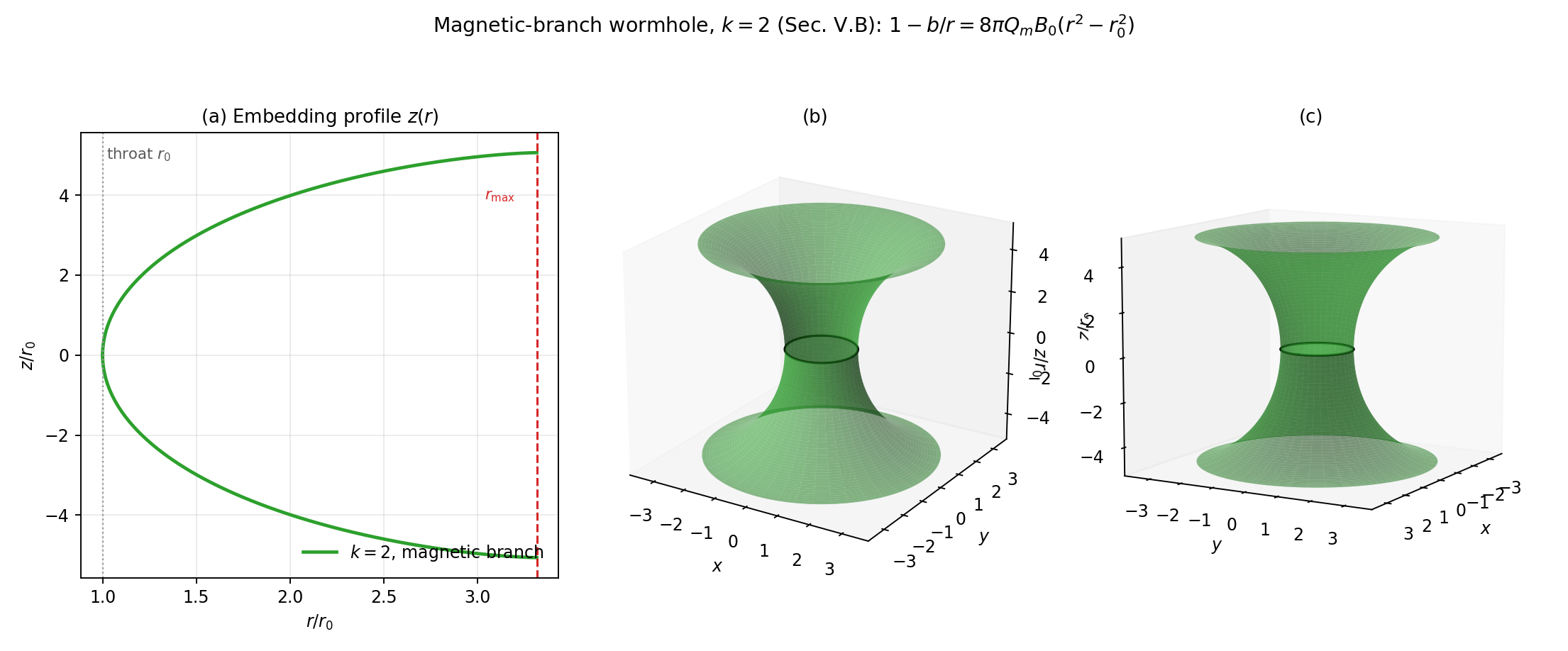}
\caption{Embedding diagram for the magnetic-branch
wormhole with $k=2$ (Sec.~\ref{sec:magnetic}.B), obtained by numerically
integrating \eqref{eq:magnetic-k2-z}. Panel (a) shows the profile $z(r)$,
bounded between the throat $r_0$ and $r_{\max}$ from
\eqref{eq:magnetic-k2-rmax} (dashed); panels (b)--(c) show the
corresponding surface of revolution from two viewpoints.}
\label{fig:conformal-embedding-k2}
\end{figure*}

Because $B=B_0$ is constant, so is $F=B_0^{2}/2$, and with it $\LL_F=Q_m/B_0$: the null
energy condition, violated wherever $\LL_F>0$ by the third case of
\eqref{eq:NECbranches}, is therefore violated uniformly, at every radius
rather than in a finite shell, a genuinely new solution, sourced by a
Lagrangian Ref.~\cite{MAH} never considered, with a shape function found
nowhere in the earlier literature on this branch either.

\subsection{The generic branch, $\Phi\not\equiv\text{const}$}

Outside the scope of Theorem~\ref{thm:magnetic-rigidity}, \eqref{eq:magnetic-master} is a genuine second-order linear ODE for $b(r)$, for any given $\Phi(r)$ with $\Phi'\not\equiv0$. Its general solution is
\begin{equation}
b(r) = r + \frac{c_1\,r^{3}}{e^{2\Phi(r)}\big[\Phi'(r)\big]^{2}} ,
\label{eq:magnetic-b-generic}
\end{equation}
with $c_1$ a single integration constant. As in the azimuthal branch, $b(r)$ is here fixed up to one constant, not free, and \eqref{eq:magnetic-b-generic} degenerates back into
the freedom of Theorem~\ref{thm:magnetic-rigidity} only where $\Phi'(r)=0$.

\begin{proposition}[No throat in the generic magnetic branch]
\label{prop:magnetic-generic-nothroat}
If $c_1\neq0$ and $\Phi'(r)\neq0$ throughout the domain, then $b(r)\neq r$
for every $r$: no throat exists.
\end{proposition}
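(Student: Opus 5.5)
The argument mirrors that of Proposition~\ref{prop:generic-nothroat}. The plan is to rewrite the general solution \eqref{eq:magnetic-b-generic} as an explicit expression for the metric factor $1-b/r$ and read off its sign. Dividing \eqref{eq:magnetic-b-generic} by $r$ gives
\begin{equation*}
1-\frac{b(r)}{r}=-\frac{c_1\,r^{2}}{e^{2\Phi(r)}\big[\Phi'(r)\big]^{2}} .
\end{equation*}
The throat condition $b(r_0)=r_0$ is then equivalent to the vanishing of this right-hand side at $r_0$.

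The key step is to check that the right-hand side is sign-definite and nonvanishing throughout the domain. Consider each factor in turn.
\begin{itemize}
\item $r^{2}>0$ for $r>0$.
\item $e^{2\Phi(r)}>0$ wherever $\Phi$ is finite, which the domain of validity of \eqref{eq:magnetic-b-generic} presupposes.
\item $[\Phi'(r)]^{2}>0$ by the hypothesis $\Phi'(r)\neq0$.
\end{itemize}
The quotient is therefore a strictly positive, finite quantity multiplied by $-c_1\neq0$. So $1-b/r$ has the fixed sign of $-c_1$ at every $r$ and never vanishes, which means $b(r)\neq r$ everywhere.

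Before this I would briefly confirm that \eqref{eq:magnetic-b-generic} solves \eqref{eq:magnetic-master}. Substituting $b=r+c_1r^{3}e^{-2\Phi}(\Phi')^{-2}$ should make the $c_1$-independent part cancel, since $b=r$ kills both sides. The $c_1$-linear terms should then balance identically.

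The only delicate point is one of scope. The derivation of \eqref{eq:magnetic-b-generic} divides by $\Phi'$ and assumes $\Phi$ finite. A throat could thus only sit where $\Phi'\to\infty$ or $\Phi'\to0$, so that the ratio degenerates. These are exactly the boundary cases the statement excludes by hypothesis. I would add a remark, parallel to the azimuthal case, that such points must be examined directly in \eqref{eq:magnetic-master}.
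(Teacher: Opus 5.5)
Your argument is correct and is essentially the paper's proof: the paper reads off from \eqref{eq:magnetic-b-generic} that $b-r=c_1r^{3}/\big(e^{2\Phi}(\Phi')^{2}\big)$ has the fixed sign of $c_1$ and so never vanishes; your version differs only by the harmless division by $r$, and your optional check that \eqref{eq:magnetic-b-generic} solves \eqref{eq:magnetic-master} does go through. One small correction to your closing remark: where $\Phi'\to0$ that ratio blows up rather than vanishes, so only $\Phi'\to\infty$ with $\Phi$ finite can host a throat, as the paper notes after the proof.
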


\begin{proof}
From \eqref{eq:magnetic-b-generic}, $b(r)-r=c_1r^{3}/\big(e^{2\Phi}(\Phi')^{2}\big)$.
Since $e^{2\Phi}>0$ and $(\Phi')^{2}>0$ wherever $\Phi'\neq0$, this
expression has the sign of $c_1$ throughout and never vanishes.
\end{proof}

A throat can therefore only appear at a point $r_0$ where $(\Phi')^2$ itself diverges, that is, where $\Phi'(r_0)\to\infty$. This is the same mechanism found in Sec.~\ref{sec:electric}.C, now for the magnetic branch.

\emph{The case $b(r)=r_0$ constant.} Setting $b(r)=r_0$ directly in
\eqref{eq:magnetic-master}, with $b'=0$, gives a second-order ODE for
$\Phi(r)$ alone. Substituting $u=\Phi'$ linearises it to $y''=Qy'$ with
$y\equiv$ (up to the substitution $u=y'/y$) an auxiliary function and
$Q(r)=3/(2r)-1/[2(r-r_0)]$; integrating twice gives, after simplification,
\begin{multline}
e^{\Phi(r)} = \frac{C_1}{4}\Big[(2r+3r_0)\sqrt{r(r-r_0)} \\
+3r_0^{2}\ln\Big(\frac{\sqrt r+\sqrt{r-r_0}}{\sqrt{r_0}}\Big)\Big] + C_2 ,
\qquad r\geq r_0 ,
\label{eq:magnetic-Phi-const-b}
\end{multline}
with $C_1,C_2$ integration constants. The construction is structurally parallel to the expression in $\eqref{eq:Phi-const-b}$ of Sec.~\ref{sec:electric}.C, but the result is not: alongside the same square-root term, $\eqref{eq:magnetic-Phi-const-b}$ carries an additional
logarithmic term that $\eqref{eq:Phi-const-b}$ does not have, so unlike
that purely algebraic solution, $e^{\Phi}$ here is a transcendental
function of $r$.

For $C_1,C_2>0$, the throat satisfies $e^{\Phi(r_0)}=C_2$, which is finite and positive. For $r>r_0$, every term inside the bracket is positive and
increasing, so $e^{\Phi(r)}\geq C_2>0$ throughout, with no horizon
anywhere. As in Sec.~\ref{sec:electric}.C, $\Phi'(r_0)\to\infty$, a
coordinate artifact rather than a curvature singularity: switching to the
proper radial distance $\ell$, direct computation gives
$\rd\Phi/\rd\ell|_{r_0}=C_1r_0/C_2$, finite.

The matter content follows from \eqref{eq:coupled-tt} and
\eqref{eq:coupled-rr} directly. Equation \eqref{eq:coupled-tt} never
involves $\Phi$, so $\LL(r_0)=\Lambda/(8\pi)+1/(16\pi r_0^{2})$, exactly as
in Secs.~\ref{sec:magnetic}.A--B. Evaluating $B^{2}\LL_F$ from
\eqref{eq:coupled-rr} at the throat, the divergent pieces from $\Phi'(r_0)$
cancel against the vanishing $(1-b/r_0)$ prefactor, leaving
\begin{equation}
B^{2}(r_0)\,\LL_F(r_0) = \frac{1}{16\pi r_0^{2}} ,
\end{equation}
finite, and identical to the throat value found in
Secs.~\ref{sec:magnetic}.A--B despite the different redshift function: the
null energy condition is violated there, as required, by exactly the same
amount regardless of which of this branch's constructions produced the
throat. This is a genuine, traversable wormhole with $b(r)=r_0$ constant,
distinct from the example of Sec.~\ref{sec:magnetic}.A with the same shape
function, since here $\Phi\neq\text{const}$.

The Lagrangian again fails to admit a closed form here, for the
same reason as in Sec.~\ref{sec:electric}.C: $F(r)$ is no longer a pure
power of $r$ once $\Phi(r)$ is this involved, and \eqref{eq:magnetic-Phi-const-b}
has no elementary inverse $r(F)$ for $C_1\neq0$.

\section{Coulomb-like nonlinear traversable wormholes}\label{sec:examples}

The classification of Secs.~\ref{sec:electric} and~\ref{sec:magnetic} holds
for an unrestricted $\LL(F)$. It is worth applying it to a Lagrangian
singled out on physical grounds rather than chosen for convenience. For
$\LL(F)\propto F^{k}$ the trace of the stress tensor is
\begin{equation}
T^{\hd\mu}{}_{\hd\mu} = D\,\LL - 4F\,\LL_F = \gamma F^{k}(D-4k) ,
\end{equation}
where $D$ is the total spacetime dimension; it vanishes identically only
for $k=D/4$. In $(3+1)$-dimensions this singles out linear Maxwell theory,
$k=1$. In $(2+1)$-dimensions, $D=3$, the analogous conformally invariant
power is $k=3/4$, the genuine three-dimensional counterpart of Maxwell's
own conformal invariance, rather than an arbitrary point on the family of
powers treated in Secs.~\ref{sec:electric}.B and~\ref{sec:magnetic}.B.

This same power was singled out, on the same trace-free grounds, by Ref.~\cite{CataldoCruzDelCampoGarcia}, in the radial electric branch. There the resulting field was found to be
\begin{equation}
E(r) = \frac{q}{r^{2}} ,
\end{equation}
the genuine Coulomb law of $(3+1)$-dimensional Minkowski space, realised inside a $(2+1)$-dimensional theory purely as a consequence of demanding a traceless stress tensor, which is the origin of the name we adopt for this Lagrangian throughout this section. That solution sources a charged (anti-)de Sitter \emph{black hole}, with
\begin{equation}
e^{2\Phi(r)} = 1-\frac{b(r)}{r} = -M-\Lambda r^{2}+\frac{4q^{2}}{3r} ,
\end{equation}
horizons located at the roots of a cubic, and finite quasilocal mass; it
is not a traversable wormhole, in agreement with Theorem~\ref{thm:radial}:
the radial electric branch admits no throat for \emph{any} $\LL(F)$,
Coulomb-like or otherwise. Consistently, that solution's own energy condition, $-(\LL+E^{2}\LL_F)=q^{2}/(12\pi r^{3})\geq0$, satisfies the weak energy condition throughout, with the opposite sign to that required by a throat. The coupling in Ref.~\cite{CataldoCruzDelCampoGarcia} therefore carries the opposite sign from the one we use below for exactly this reason.

We now show that the same conformally invariant, Coulomb-like theory,
applied instead to the azimuthal electric field or to the magnetic field
of Secs.~\ref{sec:electric} and~\ref{sec:magnetic}, does support a genuine
throat in both cases: it is the branch, not the Lagrangian, that decides
between a black hole and a wormhole.

\subsection{Azimuthal branch}

Take $\LL(F)=-\alpha(-2F)^{3/4}$ in \eqref{eq:general-k-b}--\eqref{eq:general-k-Lambda}.
With $F(r)=-Q^{2}/(2r^{4})$ from \eqref{eq:azimuthal-field}, the exponent
of the field-sourced term is $3-4k=0$: unlike every other power, the
field's contribution to $b(r)$ is here a pure constant, not a new power of
$r$,
\begin{equation}
b(r) = r + \Lambda r^{3} + 8\pi\alpha Q^{3/2} ,
\label{eq:conformal-az-b}
\end{equation}
with
\begin{equation}
\LL(r) = -\frac{\alpha Q^{3/2}}{r^{3}} , \qquad
\LL_F(r) = \frac{3\alpha r}{2\sqrt Q} .
\label{eq:conformal-az-L}
\end{equation}
Fixing the throat at $r_0$ in \eqref{eq:conformal-az-b} gives
\begin{equation}
\Lambda = -\frac{8\pi\alpha Q^{3/2}}{r_0^{3}} ,
\label{eq:conformal-az-Lambda}
\end{equation}
in agreement with \eqref{eq:general-k-Lambda} at $k=3/4$, and
\begin{equation}
1-\frac{b(r)}{r} = \frac{8\pi\alpha Q^{3/2}}{r\,r_0^{3}}\,(r-r_0)\big(r^{2}+rr_0+r_0^{2}\big) > 0, \label{eq:conformal-az-domain}
\end{equation}
for $r>r_0$ and
\begin{equation}
b'(r_0) = 1-\frac{24\pi\alpha Q^{3/2}}{r_0} < 1 ,
\end{equation}
both automatic for any $\alpha,Q>0$, exactly as in
the general power-law case of Sec.~\ref{sec:electric}.B. Explicitly,
the resulting spacetime is
\begin{eqnarray}
ds^{2} = &-& r^{2}\,\rd t^{2}
+ \frac{r\,r_0^{3}\,\rd r^{2}}{8\pi\alpha Q^{3/2}(r-r_0)(r^{2}+rr_0+r_0^{2})} \nonumber \\
&+& r^{2}\,\rd\varphi^{2} ,
\label{eq:conformal-az-metric}
\end{eqnarray}
with $\mathcal E(r)=Q/r^{2}$ from \eqref{eq:azimuthal-field} at $e^{\Phi}=r$.
Since $\LL_F(r_0)=3\alpha r_0/(2\sqrt Q)>0$, and more generally $\LL_F(r)>0$ for
every $r>0$, the null energy condition is violated throughout the
exterior rather than confined to a shell, as for any single power with
$\alpha,k>0$: the same field content as
Ref.~\cite{CataldoCruzDelCampoGarcia}, sourcing a throat instead of a horizon once
it is placed in the azimuthal rather than the radial branch.

As in the fixed-power example of Sec.~\ref{sec:magnetic}.B, the
domain condition \eqref{eq:conformal-az-domain} holds for every $r>r_0$,
but the embedding function $z'(r)=\sqrt{b(r)/(r-b(r))}$ does not: $b(r)$
changes sign at the positive root $r_{\max}$ of the cubic
\begin{equation}
r_{\max} + \Lambda r_{\max}^{3} + 8\pi\alpha Q^{3/2} = 0 ,
\label{eq:conformal-az-rmax}
\end{equation}
beyond which $z'(r)$ becomes imaginary, even though the wormhole itself
remains well defined for every $r>r_0$. Unlike the magnetic branch's
\eqref{eq:magnetic-k2-rmax}, the cubic \eqref{eq:conformal-az-rmax} has no
simple closed-form root and $r_{\max}$ is found numerically, together with
the embedding
\begin{equation}
z(r) = \int_{r_0}^{r} \sqrt{\frac{b(r')}{r'-b(r')}}\;\rd r' ,
\label{eq:conformal-az-z}
\end{equation}
shown in Figure~\ref{fig:conformal-embedding-az}.

\begin{figure*}[t]
\centering
\includegraphics[width=\linewidth]{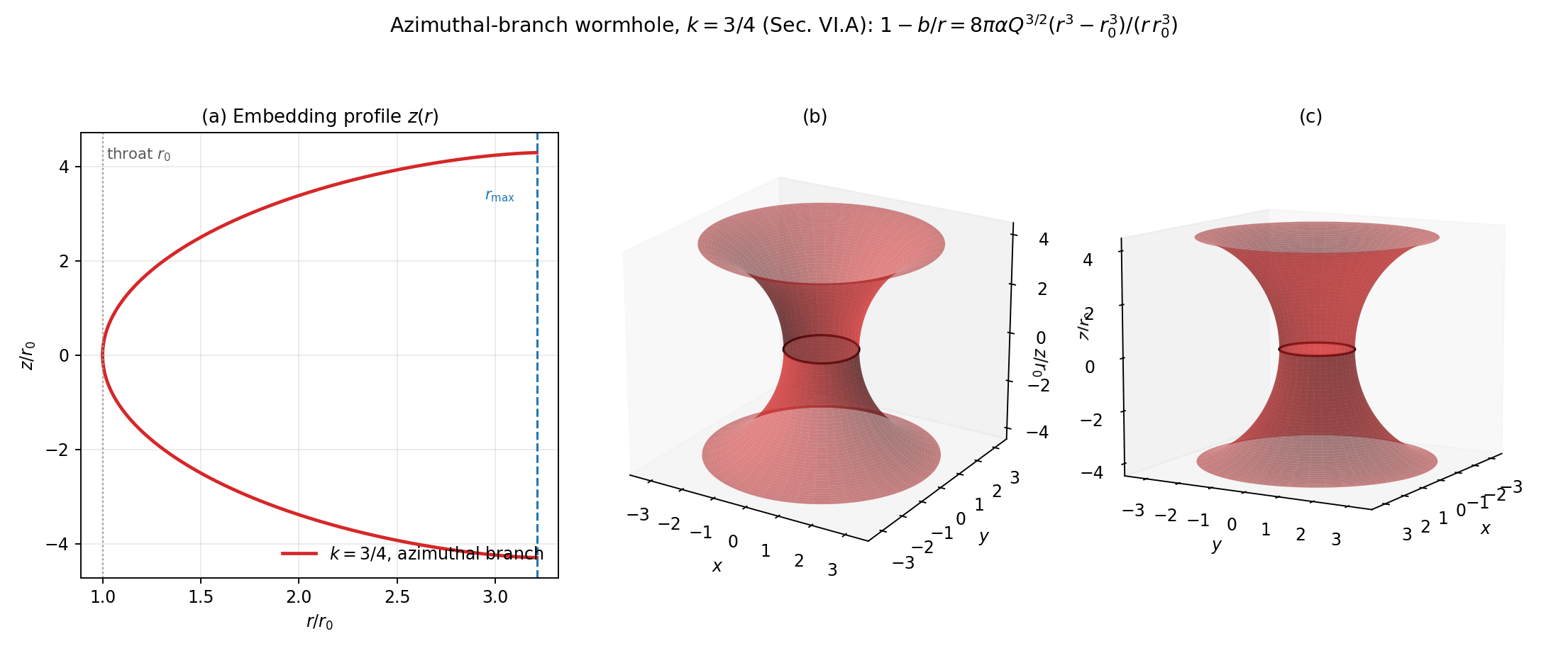}
\caption{Embedding diagram for the azimuthal-branch
wormhole with $k=3/4$ (Sec.~\ref{sec:examples}.A), obtained by numerically
integrating \eqref{eq:conformal-az-z}. Panel (a) shows the profile $z(r)$,
bounded between the throat $r_0$ and $r_{\max}$ from
\eqref{eq:conformal-az-rmax} (dashed); panels (b)--(c) show the
corresponding surface of revolution from two viewpoints.}
\label{fig:conformal-embedding-az}
\end{figure*}

\subsection{Magnetic branch}

Take $\LL(F)=\gamma F^{3/4}$ in \eqref{eq:magnetic-B0}. This fixes
\begin{equation}
B \equiv B_0 = \frac{8\sqrt2\,Q_m^{2}}{9\gamma^{2}} ,
\label{eq:conformal-mag-B0}
\end{equation}
and \eqref{eq:magnetic-b-other-k}, with the throat fixed at $r_0$, gives
\begin{equation}
b(r) = r + 8\pi Q_mB_0\,r_0^{2}r - 8\pi Q_mB_0\,r^{3} .
\label{eq:conformal-mag-b}
\end{equation}
Direct evaluation confirms $b(r_0)=r_0$, and
\begin{eqnarray}
1-\frac{b(r)}{r} &=& 8\pi Q_mB_0\big(r^{2}-r_0^{2}\big) > 0 \quad (r>r_0), \nonumber \\
b'(r_0) &=& 1-16\pi Q_mB_0r_0^{2} < 1 ,
\end{eqnarray}
both automatic for any $Q_m,B_0>0$. Explicitly, the
resulting spacetime is
\begin{equation}
ds^{2} = -\rd t^{2} + \frac{\rd r^{2}}{8\pi Q_mB_0(r^{2}-r_0^{2})}
+ r^{2}\,\rd\varphi^{2} ,
\label{eq:conformal-mag-metric}
\end{equation}
with $B(r)\equiv B_0$ from \eqref{eq:conformal-mag-B0} and $\Phi\equiv0$.
Because $B\equiv B_0$ is constant here,
so is $\LL_F(r_0)=Q_m/B_0>0$: the null energy condition is again violated
uniformly, at every radius, rather than in a finite shell.

\subsection{Comparison}

The same conformally invariant, Coulomb-like Lagrangian therefore produces
three qualitatively different outcomes depending only on which field
component sources it: a charged black hole in the radial electric branch
(Ref.~\cite{CataldoCruzDelCampoGarcia}), and a traversable wormhole in each of the
remaining two. The azimuthal and magnetic wormholes share the same
asymptotic structure, $1-b/r\sim|\Lambda|\,r^{2}$ and
$1-b/r\sim8\pi Q_mB_0\,r^{2}$ respectively as $r\to\infty$, both AdS-like;
they differ near the throat, where the azimuthal shape function grows
cubically in $(r-r_0)$ while the magnetic one grows quadratically.
Figure~\ref{fig:conformal-embedding} shows the embedding diagrams for
both, at matched throat radius $r_0$ and comparable amplitude of the
coupling constants. In contrast to the $k=1/2$ Lagrangians of
Refs.~\cite{CanateBreton,MHG}, whose choice carries no independent
physical justification, the power used throughout this section is fixed
uniquely by the requirement of a traceless stress tensor, and its
consequences depend entirely on which of the three branches of
Proposition~\ref{prop:branches} it is made to source.

\begin{figure*}[t]
\centering
\includegraphics[width=\linewidth]{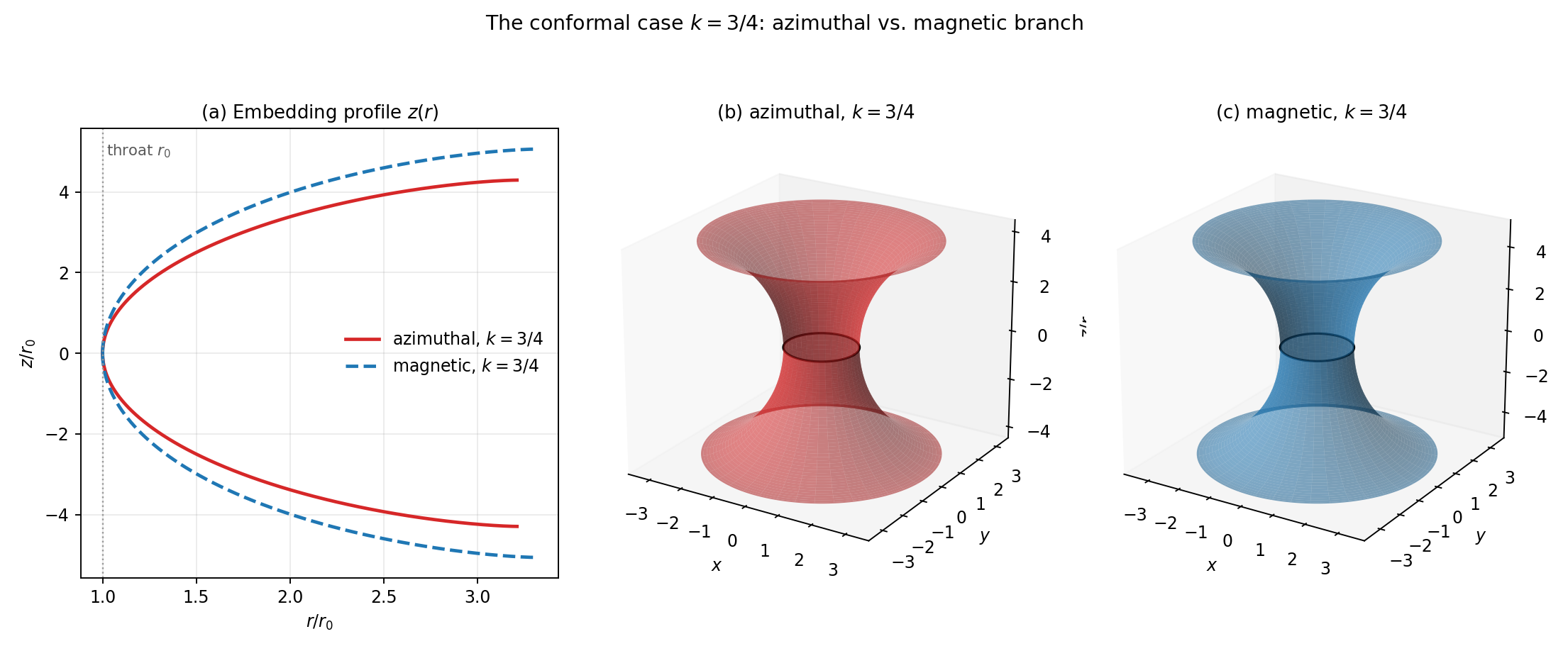}
\caption{Embedding diagrams for the conformal, Coulomb-like case $k=3/4$,
azimuthal branch (red solid curve, Sec.~\ref{sec:examples}.A) and magnetic
branch (blue dashed curve, Sec.~\ref{sec:examples}.B), at matched throat
radius $r_0$ and comparable coupling amplitude. Panel (a) shows the profile
$z(r)$; panels (b)--(c) show the corresponding surfaces of revolution. Both
are asymptotically AdS and are shown out to the point where $b(r)=r$, beyond
which the embedding in $\mathbb{E}^{3}$ ceases to apply. The radial
electric branch of the same theory has no such diagram: it is a black
hole, not a wormhole (Ref.~\cite{CataldoCruzDelCampoGarcia}).}
\label{fig:conformal-embedding}
\end{figure*}

Another nonlinear electrodynamics of interest in $(2+1)$ dimensions is Born-Infeld theory. Its thermodynamics were worked out, for a charged
circularly symmetric black hole in the radial electric branch, by Ref.~\cite{Cataldo:1999wr}. That solution is a genuine complement to
the Coulomb-like example of Ref.~\cite{CataldoCruzDelCampoGarcia} discussed above: with the standard normalisation $\LL(F)=-\frac{\beta^2}{4\pi}\big(\sqrt{1+2F/\beta^2}-1\big)$
recovering Maxwell at weak field, $\LL_F(F)=-\beta/\big(4\pi\sqrt{2F+\beta^2}\big)$ is negative for every admissible $F$, in either branch, not only the
radial one; the null energy condition is never violated anywhere, so Born-Infeld admits no throat in the azimuthal or magnetic branch either,
consistent with the solution of Ref.~\cite{Cataldo:1999wr} itself being a genuine charged black hole rather than a wormhole.

\section{Conclusions} \label{sec:conclusions}

We have given a complete classification of static, circularly symmetric
traversable wormholes in $(2+1)$-dimensional Einstein gravity coupled to
an arbitrary nonlinear electrodynamics $\LL(F)$. Proposition~\ref{prop:branches}
shows that the electromagnetic field compatible with the symmetry
occupies exactly one of three mutually exclusive configurations: radial
electric, azimuthal electric, or magnetic. Theorem~\ref{thm:radial}
excludes the first of these entirely, for any $\LL(F)$ and any $\Lambda$: the
radial branch saturates the null energy condition identically and any
attempted throat is forced to coincide with an event horizon, a
purely geometric obstruction independent of the sign of $\LL_F$.

In each of the remaining two branches, Theorems~\ref{thm:rigidity}
and~\ref{thm:magnetic-rigidity} show that the field equations admit
exactly two regimes, and no third. In the first, the redshift
function is fixed ($e^{\Phi}=Cr$ in the azimuthal branch,
$\Phi=\text{const}$ in the magnetic one) and the shape function $b(r)$ is
completely free, with $\Lambda$ absorbed without constraint into
$\LL(r)$; we reconstructed $\LL(F)$ in closed form for arbitrary $b(r)$ in
both branches, with the field, $\LL$, and $\LL_F$ finite at the throat. In
the generic regime, the roles invert: $b(r)$ is fixed up to a single
integration constant by Eq.~\eqref{eq:b-generic} (azimuthal) or
\eqref{eq:magnetic-b-generic} (magnetic), and
Propositions~\ref{prop:generic-nothroat} and~\ref{prop:magnetic-generic-nothroat}
show that no throat exists there except at the boundary of that formula's
validity, where the redshift function itself becomes singular in a
controlled, non-physical way; we exhibited and fully verified the
resulting exceptional solutions in closed form in both branches. Fixing
$\LL(F)$ to a single power $|F|^k$, for any $k$ and not only the $k=1/2$
of the prior literature, was shown to force $\Lambda\neq0$ and to leave
$b(r)$ determined up to a finite number of integration constants in the azimuthal branch, and to
force the magnetic field itself to a constant in the magnetic branch: the freedom of the fixed-redshift regime requires $\LL(F)$ to be genuinely unrestricted, not merely a free exponent. One member of the azimuthal single-power family reproduces the static BTZ mass function exactly, the same geometry that is a black hole in vacuum becoming a traversable wormhole once sourced by the nonlinear field instead of $\Lambda$ alone.

Along the way we corrected two results in the prior literature. The
non-existence proof of Ref.~\cite{ArellanoLobo} was shown to rest on an
ansatz that omits the azimuthal electric field entirely, and its own
radial-branch argument to contain a sign/degree slip that does not affect
its conclusion; and the shape-function--field relation of
Ref.~\cite{MAH} was shown to carry a spurious factor of $r$, corrected in
Eq.~\eqref{eq:magnetic-B}.

Section~\ref{sec:examples} applied the classification to a Lagrangian
singled out on physical rather than illustrative grounds: the unique
power-Maxwell theory with a traceless stress tensor in $(2+1)$ dimensions,
$k=3/4$. This is the same Lagrangian already known to source a
Coulomb-like charged black hole in the radial branch~\cite{CataldoCruzDelCampoGarcia};
we showed that it sources genuine traversable wormholes instead in both
the azimuthal and magnetic branches, with the null energy condition
violated throughout the exterior in the former and uniformly at every
radius in the latter, as holds for any single
power with $\alpha,k>0$, in contrast to the finite-shell confinement found
for the reconstructed Lagrangian of Sec.~\ref{sec:electric}.A. The three outcomes
of the same theory, one per branch, are the most direct illustration of
this paper's central claim: it is the electromagnetic configuration, not
the choice of Lagrangian, that decides between a horizon and a throat.
Born-Infeld electrodynamics~\cite{Cataldo:1999wr} makes the same point
from the opposite direction: with the normalisation that recovers Maxwell
at weak field, $\LL_F(F)<0$ for every admissible $F$ in every branch, so
the null energy condition is never violated and no throat exists anywhere
in this theory, consistent with its own radial-branch solution being a
genuine black hole. The classification's freedom is real, but it is not
unconditional.

The methods used throughout, direct integration of the coupled field equations in the areal-radius gauge, apply without modification to any power-Maxwell or more general $\LL(F)$ theory in this symmetry class, and the same dichotomy between fixed and generic redshift function is likely to recur in other low-dimensional settings where the electromagnetic field admits more than one inequivalent static configuration.

\end{document}